\theoremstyle{plain}
\newtheorem{theorem}{Theorem}[section]
\newtheorem{lemma}[theorem]{Lemma}
\theoremstyle{definition}
\newtheorem{remark}[theorem]{Remark}
\newtheorem{property}[theorem]{Property}
\newtheorem{hypothesis}{Hypothesis}
\renewcommand{\epsilon}{\varepsilon}
\newcommand{\eps}{\varepsilon}
\newcommand{\tO}{\widetilde{O}}
\newcommand{\OO}{\widetilde{O}}
\newcommand{\ExactTri}{\textsc{Exact-Tri}}
\newcommand{\ThreeSUM}{\textsc{3sum}}
\newcommand{\ThreeSUMList}{\textsc{3sum-List}}
\newcommand{\SparseTriList}{\textsc{Sparse-Tri-List}}
\newcommand{\SparseTri}{\textsc{Sparse-Tri}}
\newcommand{\AESparseTri}{\textsc{AE-Sparse-Tri}}
\newcommand{\ANSparseTri}{\textsc{AN-Sparse-Tri}}
\newcommand{\ExactTriList}{\textsc{Exact-Tri-List}}
\title{Simpler Reductions from Exact Triangle}
\author{Timothy M. Chan\thanks{Supported by NSF Grant CCF-2224271.}\\UIUC\\tmc@illinois.edu \and Yinzhan Xu\thanks{Partially supported by NSF Grants CCF-2129139 and CCF-2330048 and BSF Grant 2020356.}\\MIT\\xyzhan@mit.edu}
\date{}
\begin{document}

\maketitle

\begin{abstract}
    In this paper, we provide simpler reductions from Exact Triangle to two important problems in fine-grained complexity: Exact Triangle with Few Zero-Weight $4$-Cycles and All-Edges Sparse Triangle. 

    Exact Triangle instances with few zero-weight $4$-cycles  was considered by Jin and Xu [STOC 2023], who used it as an intermediate problem to show $3$SUM hardness of All-Edges Sparse Triangle with few $4$-cycles (independently obtained by Abboud, Bringmann and Fischer [STOC 2023]), which is further used to show $3$SUM hardness of a variety of problems, including $4$-Cycle Enumeration, Offline Approximate Distance Oracle, Dynamic Approximate Shortest Paths and All-Nodes Shortest Cycles. We provide a simple reduction from Exact Triangle to Exact Triangle with few zero-weight $4$-cycles. Our new reduction not only simplifies Jin and Xu's previous reduction, but also strengthens the conditional lower bounds from being under the $3$SUM hypothesis to the even more believable Exact Triangle hypothesis. As a result, all conditional lower bounds shown by Jin and Xu [STOC 2023] and by Abboud, Bringmann and Fischer [STOC 2023] using All-Edges Sparse Triangle with few $4$-cycles as an intermediate problem now also hold under the Exact Triangle hypothesis. 

    We also provide two alternative proofs of the conditional lower bound of the All-Edges Sparse Triangle problem under the Exact Triangle hypothesis, which was originally proved by Vassilevska Williams and Xu [FOCS 2020]. Both of our new reductions are simpler, and one of them is also deterministic---all previous reductions from Exact Triangle or 3SUM to All-Edges Sparse Triangle (including P\u{a}tra\c{s}cu's seminal work [STOC 2010]) were randomized.
\end{abstract}

\section{Introduction}
In this paper, we present several new simpler reductions between core problems in fine-grained complexity, which lead to simpler conditional lower bound proofs for a plethora of problems, and at the same time strengthening some of these results by weakening the hypothesis assumed.

In the Exact Triangle problem (also known as Zero-Weight Triangle), we are given an $n$-node weighted graph $G$ whose edge weights are from $[\pm n^{O(1)}]$,\footnote{For a nonnegative integer $N$, $[N]$ denotes $\{1, \ldots, N\}$ and $[\pm N]$ denotes $\{-N, \ldots, N\}$.} and we need to determine whether it contains a triangle whose edge weights sum up to $0$. It is one of the key problems in the field of fine-grained complexity, primarily due to its relationship between the $3$SUM hypothesis and the APSP hypothesis, which are among the three central hypotheses in fine-grained complexity (the other one is the Strong Exponential Time hypothesis). It is known that under either the $3$SUM hypothesis \cite{VWfindingcountingj} or the APSP hypothesis \cite{focsyj}, the Exact Triangle problem requires $n^{3-o(1)}$ time. As a result, the following Exact Triangle hypothesis holds as long as at least one of the $3$SUM hypothesis and the APSP hypothesis holds.
\begin{hypothesis}[Exact Triangle hypothesis]
\label{hypo:exacttri}
In the Word-RAM model with $O(\log n)$-bit words, Exact Triangle on $n$-node weighted graphs whose edge weights are from $[\pm n^{O(1)}]$ requires $n^{3-o(1)}$ time. 
\end{hypothesis}

By designing reductions from the Exact Triangle problem to other problems, we can obtain conditional lower bounds that hold under the Exact Triangle hypothesis, which in turn also hold under either the $3$SUM hypothesis or the APSP hypothesis. This makes it desirable to design reductions from the Exact Triangle problem.

\paragraph{Exact Triangle with few zero-weight $4$-cycles.} Recently, Jin and Xu \cite{JinXstoc23} reduced $3$SUM to $4$-Cycle Enumeration, Offline Approximate Distance Oracle, Dynamic Approximate Shortest Paths and All-Nodes Shortest Cycles. As one of their intermediate steps, they showed $3$SUM-hardness of a variant of Exact Triangle on instances with a small number of zero-weight $4$-cycles. This part of their reduction is very technically involved, and it uses heavy machineries such as the Balog--Szemer{\'e}di--Gowers theorem from additive combinatorics. 
Independently, Abboud, Bringmann and Fischer \cite{AbboudBF23} also showed $3$SUM hardness of several graph problems including $4$-Cycle Enumeration and Offline Approximate Distance Oracle. They do not use aforementioned variant of Exact Triangle in their reduction, but their reduction also uses tools from additive combinatorics. 
(See Figure~\ref{fig:reductions} (right).)

\begin{figure}[ht]
\centering
\begin{subfigure}[b]{0.49\textwidth}
    \centering
    \scalebox{0.7}{
    \begin{tikzpicture}
		\node at(0, 0)  [anchor=center, align=center] (exacttri){Exact Triangle};
		\node at(-3, -1)  [anchor=center, align=center] (3sum){$3$SUM};
		\node at(3, -1)  [anchor=center, align=center] (apsp){APSP};	
		\node at(0, 4)  [anchor=center, align=center] (aesparsetri){All-Edges Sparse Triangle};	
		\node at(0, 5.5)  [anchor=center, align=center] (probs){many data structure problems \\ (dynamic reachability, set disjointness queries, etc.)};	
		
		\draw[->,line width=1pt] (3sum) to[]  node[below] {} (exacttri);
		\draw[->,line width=1pt] (apsp) to[]  node[below] {} (exacttri);
		\draw[->,line width=1pt] (exacttri) to[]  node(vx20)[right] {\cite{williamsxumono}} (aesparsetri);
		\node[below = of vx20.west, text = red, anchor = west, align = left] () {this paper \\ (simpler \& det.)};
		
		\draw[->,line width=1pt, text width = 1.6cm] (3sum) to[]  node()[left] {\cite{patrascu2010towards} or \cite{KopelowitzPP16}} (aesparsetri);
		
		\draw[->,dashed, bend right, line width=1pt] (apsp) to[]  node()[right] {or \cite{CVXstoc22}} (aesparsetri);

		\draw[->,line width=1pt] (aesparsetri) to[]  node[below] {} (probs.200);
		\draw[->,line width=1pt] (aesparsetri) to[]  node[below] {} (probs.230);
		\draw[->,line width=1pt] (aesparsetri) to[]  node[below] {} (probs.310);
		\draw[->,line width=1pt] (aesparsetri) to[]  node[below] {} (probs.340);
	\end{tikzpicture}}
\end{subfigure}
\hfill
\begin{subfigure}[b]{0.49\textwidth}
    \centering
    \scalebox{0.7}{
    \begin{tikzpicture}
		\node at(0, 0)  [anchor=center, align=center] (exacttri){Exact Triangle};
		\node at(-3, -1)  [anchor=center, align=center] (3sum){$3$SUM};
		\node at(3, -1)  [anchor=center, align=center] (apsp){APSP};	
		
		\node at(-2, 2)  [anchor=center, align=center] (3sumenergy){$3$SUM with\\ moderate energy};	

		\node at(-1, 4)  [anchor=center, align=center] (exacttriC4){Exact Triangle\\ with few zero-wt 4-cycles};	
		\node at(0, 6)  [anchor=center, align=center] (aesparsetriC4){All-Edges Sparse Triangle \\ with few 4-cycles};	
		\node at(0, 7.5)  [anchor=center, align=center] (probs){many more problems\\
($4$-cycle enumeration, approximate distance oracles, etc.)};	

		\draw[->,line width=1pt] (3sum) to[]  node[below] {} (exacttri);
		\draw[->,line width=1pt] (apsp) to[]  node[below] {} (exacttri);
		\draw[->,line width=1pt, text width = 3.1cm] (3sum) to[]  node()[left] {\cite{JinXstoc23} or \cite{AbboudBF23} (complicated!)} (3sumenergy);
		\draw[->,line width=1pt] (3sumenergy) to[]  node()[left] {\cite{JinXstoc23}} (exacttriC4);
		\draw[->,line width=1pt] (exacttriC4) to[]  node()[left] {\cite{JinXstoc23}} (aesparsetriC4);
		\draw[->,line width=1pt, bend left = 50] (3sumenergy.150) to[]  node()[left] {or \cite{AbboudBF23}} (aesparsetriC4);
		\draw[->,line width=1pt, text = red, text width = 3cm] (exacttri) to[]  node()[right] {this paper \quad \quad (new \& simpler!)} (0, 3.5);
		
		\draw[->,dashed, line width=1pt, text width = 2.6cm, bend right = 70, align = center] (exacttri.east) to[]  node()[right] {\cite{AbboudBKZ22} \quad \quad (but suboptimal)} (aesparsetriC4);
		
		\draw[->,line width=1pt] (aesparsetriC4) to[]  node[below] {} (probs.200);
		\draw[->,line width=1pt] (aesparsetriC4) to[]  node[below] {} (probs.230);
		\draw[->,line width=1pt] (aesparsetriC4) to[]  node[below] {} (probs.310);
		\draw[->,line width=1pt] (aesparsetriC4) to[]  node[below] {} (probs.340);
	\end{tikzpicture}}
\end{subfigure}
\caption{Summary of reductions.  (See the cited references for precise definitions of some of the intermediate problems in the right diagram, which may have slight variations in different papers.)}\label{fig:reductions}
\end{figure}
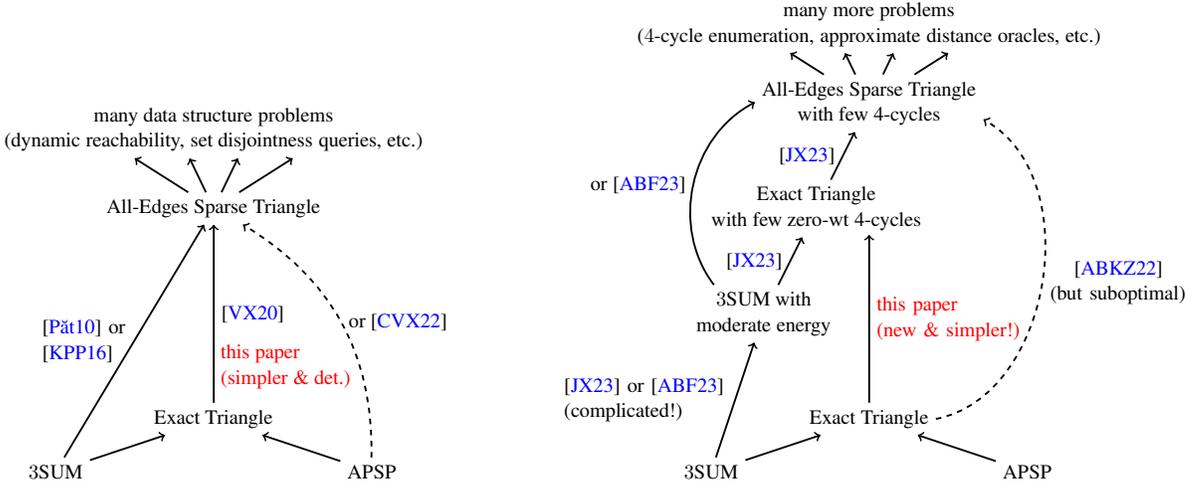

In this paper, we \emph{vastly} simplify Jin and Xu's reduction by designing a reduction from Exact Triangle (instead of $3$SUM in  their case) to the aforementioned variant of Exact Triangle, by a proof that is under two pages. Moreover, with our new reduction, the lower bounds for $4$-Cycle Enumeration, Offline Approximate Distance Oracle, Dynamic Approximate Shortest Paths and All-Nodes Shortest Cycles in \cite{JinXstoc23} now also hold under the Exact Triangle hypothesis; previously, they only hold under the $3$SUM hypothesis. (We remark that our new reduction does not imply Exact Triangle hardness for problems such as Sidon Set Verification and 4-LDT, which were shown to be $3$SUM hard in \cite{JinXstoc23}, because the reduction from $3$SUM to them does not use the variant of Exact Triangle as an intermediate problem.)

Our new reduction is inspired by the recent work of Chan, Vassilevska Williams and Xu \cite{CVXstoc23}. In particular, they showed versions of the Balog--Szemer{\'e}di--Gowers theorem that have simpler proofs inspired by the famous ``Fredman's trick''\footnote{The simple observation that $a+b=a'+b'$ is equivalent to $a-a'=b'-b$.} \cite{fredman1976new}.  We borrow their intuition that Fredman's trick can help simplifying proofs involving the Balog--Szemer{\'e}di--Gowers theorem, but our reduction does not use their results directly.

\paragraph{All-Edges Sparse Triangle.} We also present two simplifications of the known reduction from Exact Triangle to the All-Edges Sparse Triangle problem \cite{williamsxumono},  in which we are given an $m$-edge graph, and for each edge in the graph, we need to decide whether it is in a triangle.  This problem can be solved in $O(m^{2\omega / (\omega + 1)})$ time~\cite{AlonYZ97}, where $\omega < 2.372$ is the square matrix multiplication exponent \cite{DuanWZ23}. When $\omega = 2$, this running time becomes $O(m^{4/3})$. In a landmark paper, P{\u{a}}tra{\c{s}}cu \cite{patrascu2010towards} showed that this problem requires $m^{4/3-o(1)}$ time under the $3$SUM hypothesis, matching the running time if $\omega = 2$.\footnote{Technically, P{\u{a}}tra{\c{s}}cu proved the conditional lower bound for the problem of listing $m$ triangles in an $m$-edge graph, but this problem is equivalent up to polylog factors to the All-Edges Sparse Triangle problem under randomized reductions \cite{DurajK0W20}.} Vassilevska Williams and Xu \cite{williamsxumono} later strengthened this lower bound to make it also work under the Exact Triangle hypothesis.
Another paper by Chan, Vassilevska Williams and Xu \cite{CVXstoc22} showed that conditional lower bounds for All-Edges Sparse Triangle also hold under the real-valued variants of fine-grained hypotheses, including an $m^{4/3-o(1)}$ time lower bound under the Real APSP hypothesis, an $m^{5/4-o(1)}$ time lower bound under the Real Exact Triangle hypothesis, and an $m^{6/5-o(1)}$ time lower bound under the Real 3SUM hypothesis. Since P{\u{a}}tra{\c{s}}cu's work \cite{patrascu2010towards},  All-Edges Sparse Triangle has been used as a bridge to show conditional lower bounds for a wide variety of dynamic problems, such as  dynamic reachability and dynamic shortest paths \cite{patrascu2010towards, abboud2014popular}, and also static data structure problems, such as set disjointness and set intersection~\cite{KopelowitzPP16} and certain types of generalized range queries~\cite{DurajK0W20}.
(See Figure~\ref{fig:reductions} (left).) 

We design two  simple new reductions from Exact Triangle to All-Edges Sparse Triangle, simplifying Vassilevska Williams and Xu's reduction~\cite{williamsxumono}. In particular, our reductions do not use any hash function and are considerably shorter than theirs. 

Our first reduction is adapted from Chan, Vassilevska Williams and Xu's reduction from real-valued problems to All-Edges Sparse Triangle \cite{CVXstoc22}. Even though they were only able to obtain a non-tight $m^{5/4-o(1)}$ time lower bound for All-Edges Sparse Triangle under the Real Exact Triangle hypothesis, we show that their ideas can be implemented more efficiently for the integer-valued version of Exact Triangle. Similar to \cite{CVXstoc22}, our new reduction also relies on Fredman's trick. An additional advantage of this reduction is that it is deterministic, while the reduction in \cite{williamsxumono} is randomized. 
Note that P{\u{a}}tra{\c{s}}cu's original reduction~\cite{patrascu2010towards} from 3SUM to All-Edges Sparse Triangle
and Kopelowitz, Pettie and Porat's later reduction~\cite{KopelowitzPP16} were
also randomized (and crucially relied on hashing).  Our approach  gives the first deterministic reduction from 3SUM to All-Edges Sparse Triangle as well, yielding the same $m^{4/3-o(1)}$ conditional lower bound.

Our second reduction borrows an idea from a recent conditional lower bound of Sparse Triangle Detection under the strong $3$SUM hypothesis by Jin and Xu \cite{JinXstoc23}. Sparse Triangle Detection is a problem closely-related to All-Edges Sparse Triangle, where now we only need to decide if a given graph contains a triangle or not. Jin and Xu \cite{JinXstoc23} showed an $m^{6/5-o(1)}$ lower bound for Sparse Triangle Detection under the  Strong $3$SUM hypothesis, which states that $3$SUM instances on $n$ integers from $[\pm n^2]$ requires $n^{2-o(1)}$ time. Previously, Abboud, Bringmann, Khoury and Zamir \cite{AbboudBKZ22} showed an $m^{4/3-o(1)}$ lower bound for Sparse Triangle Detection  under an unbalanced bounded-weight variant of the Exact Triangle hypothesis.

The method for our second reduction can actually lead to more consequences, including an %
$m^{5/4-o(1)}$
lower bound of the All-Nodes Sparse Triangle problem (deciding, for each node, whether it is in a triangle) under the Exact Triangle hypothesis, and an $m^{9/7-o(1)}$ lower bound of the Sparse Triangle Detection problem under a (more natural) balanced bounded-weight variant of the Exact Triangle hypothesis.

\section{Exact Triangle with Few Zero-Weight \texorpdfstring{$4$}{4}-Cycles}
In this section, we will consider Exact Triangle instances where the edge weights $w_{ij}$ can be different from $w_{ji}$, and the weight of a triangle $(i, j, k)$ is defined as $w_{ij} + w_{jk} + w_{ki}$. The following property was considered by \cite{JinXstoc23}.

\begin{property}
\label{prop:fewC4}
In an Exact Triangle instance on edge set $E$ and weight function $w$, 
\begin{itemize}
    \item \textbf{Antisymmetry}: For every $(i, j) \in E$, it holds that $(j, i) \in E$ and $w_{ij} = -w_{ji}$;
    \item \textbf{Few zero-weight 4-cycles}: The number of $(i, j, k, \ell)$ for which $(i, j), (j, k), (k, \ell), (\ell, i) \in E$ and $w_{ij} + w_{jk} + w_{kl} + w_{li} = 0$ is at most $n^3$ (we will call each $(i, j, k, \ell)$ a zero-weight 4-cycle, and here, $i, j, k ,\ell$ are not necessarily distinct). 
\end{itemize}

\end{property}

Jin and Xu~\cite{JinXstoc23} gave a long and complicated proof that
Exact Triangle instances on graphs with \cref{prop:fewC4} require $n^{3-o(1)}$ time under the 3SUM hypothesis.  We present a much simpler proof of the same result under the Exact Triangle hypothesis (from which one can then obtain conditional lower bounds for a host of other problems, including 4-Cycle-Enumeration and Offline Approximate Distance Oracle, as shown in \cite{JinXstoc23}):

\begin{theorem}
Under the Exact Triangle hypothesis, Exact Triangle instances on $n$-node graphs with \cref{prop:fewC4} requires $n^{3-o(1)}$ time. 
\end{theorem}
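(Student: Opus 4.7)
The plan is to prove the theorem by giving a direct, short reduction from general Exact Triangle to the constrained version with Property~\ref{prop:fewC4}, inspired by Fredman's trick. After the standard preprocessing of reducing to the tripartite antisymmetric setting (three parts $A, B, C$ of size $n$, antisymmetric weights, triangles of the form $(a,b,c)\in A\times B\times C$), the core task is to output one or more Exact Triangle instances on $O(n)$ vertices satisfying both the antisymmetry condition and the $\le n^3$ bound on zero-weight $4$-cycles, with at most $n^{o(1)}$ overhead.

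The key observation I would exploit is the Fredman-style rewriting of the $4$-cycle equation: under antisymmetry, $(i,j,k,\ell)$ is a zero-weight $4$-cycle iff $w_{ij}+w_{jk}=w_{i\ell}+w_{\ell k}$, which is equivalent to the ``collision'' equation $w_{ij}-w_{i\ell}=w_{\ell k}-w_{jk}$. Thus, counting zero-weight $4$-cycles reduces to counting collisions of the difference functions $D(i;j,\ell):=w_{ij}-w_{i\ell}$, namely $\sum_{j,\ell}\sum_{v}|\{i:D(i;j,\ell)=v\}|^{2}$. The Property is therefore an ``almost-injectivity on differences'' condition for most pairs $(j,\ell)$, which is precisely what generic/random weights satisfy. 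The whole reduction becomes: perturb $w$ so that the $D$-arrays look random, without destroying the target zero-weight triangle.

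Concretely, I would build the new weights as $\tilde w_{ij}=Q\cdot w_{ij}+\sigma_{ij}$ for a large multiplier $Q$ and an antisymmetric perturbation $\sigma_{ij}$ designed via Fredman's trick to decouple the triangle constraint from the $4$-cycle constraint. A naive iid noise $\sigma$ fails (it reduces $4$-cycles and preserves triangles by the same factor $1/R$), so $\sigma$ must be correlated: my plan is to construct it from random vertex labels $\rho_v\in[R]$ (with $R=n^{o(1)}$) combined non-linearly, such that the ``collision'' quantity $\sigma_{ij}-\sigma_{i\ell}$ gets a fresh random contribution depending on $(j,\ell)$, while the three $\sigma$-increments around a triangle still have a non-negligible chance of summing to zero. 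Then I would show (i) the expected number of zero-weight $4$-cycles of $\tilde G$ is $O(n^3)$ by a direct Cauchy–Schwarz / collision calculation, and (ii) any fixed zero-weight triangle of $G$ survives in $\tilde G$ with probability $n^{-o(1)}$, so that repeating the reduction $n^{o(1)}$ times and taking a union bound over the fresh random choices catches the solution.

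The step I expect to be the main obstacle is exactly the design of $\sigma$ in the previous paragraph: getting simultaneously a $\Omega(n^{1-o(1)})$-factor reduction in zero-weight $4$-cycles and an $n^{-o(1)}$ survival probability for the zero-weight triangle. This is where Fredman's trick is crucial: the $4$-cycle ``collision'' equation $w_{ij}-w_{i\ell}=w_{\ell k}-w_{jk}$ has a product-like structure in $(j,\ell)$, so a perturbation that injects independent randomness indexed by pairs $(j,\ell)$ attacks $4$-cycles much more aggressively than triangles, which involve only three (not four) edges and no such pair indexing. Once $\sigma$ is in place, the verification of Property~\ref{prop:fewC4}, the triangle-preservation bound, and the final union bound should all be routine, keeping the entire proof within the promised two pages and yielding the $n^{3-o(1)}$ conditional lower bound under the Exact Triangle hypothesis.
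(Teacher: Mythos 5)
Your proposal is a plan rather than a proof: it explicitly defers the central step --- the construction of the perturbation $\sigma$ --- and you yourself flag that step as ``the main obstacle.'' Unfortunately, under the constraints you impose, that step cannot be carried out, so the gap is not merely a technical detail left to the reader.

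Here is the concrete obstruction. You restrict $\sigma_{uv}$ to be a function of vertex labels $\rho_u,\rho_v\in[R]$ with $R=n^{o(1)}$ (this is forced if you want a fixed zero-weight triangle to survive the perturbation with probability $n^{-o(1)}$). But then, for any pair $j,\ell$ with $\rho_j=\rho_\ell$, antisymmetry gives
$\sigma_{ij}+\sigma_{jk}+\sigma_{k\ell}+\sigma_{\ell i}
= \bigl(\sigma_{ij}+\sigma_{\ell i}\bigr)+\bigl(\sigma_{jk}+\sigma_{k\ell}\bigr)=0$
for \emph{every} $i,k$, since $\sigma_{ij}=\sigma_{i\ell}$ and $\sigma_{jk}=\sigma_{\ell k}$ when the two labels coincide. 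So every original zero-weight $4$-cycle $(i,j,k,\ell)$ with $\rho_j=\rho_\ell$ survives the perturbation. By Cauchy--Schwarz, for any fixed labeling there are $\Omega(n^2/R)$ such pairs $(j,\ell)$ inside a single part, so on a worst-case instance (e.g., $w\equiv 0$, where all $\Theta(n^4)$ relevant $4$-cycles are zero-weight) the perturbed graph retains $\Omega(n^4/R)=n^{4-o(1)}\gg n^3$ zero-weight $4$-cycles, violating \cref{prop:fewC4}. Nothing in the ``nonlinear combination'' of labels can help, because the cancellation above uses only antisymmetry and equality of labels, not linearity of $\sigma$. In short, you cannot simultaneously have $n^{-o(1)}$ triangle survival and a multiplicative factor-$1/n$ reduction in $4$-cycles via a single label-based additive perturbation; the two targets are pinned to the same scale.

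The paper avoids this wall by not trying to do everything with one random reweighting. It splits into two cases according to whether the number of zero-weight $4$-cycles is already at most $n^{4-\delta}$. In the sparse case it randomly buckets the vertices and solves the instance on \emph{every} triple of buckets, so zero-weight triangles are preserved with probability $1$ (no survival-probability tradeoff at all), while each small instance sees few $4$-cycles in expectation; a cheap sampling test plus a brute-force fallback handles the tail. In the dense case it finds a single heavy edge $(i_0,k_0)$ through $\Omega(n^{2-\delta})$ zero-weight $4$-cycles, identifies (via Fredman's trick) the large edge set $E_0$ of its $4$-cycle partners, tests all triangles through $E_0$ using Equality Product in $O(n^{(3+\omega)/2})$ time, deletes $E_0$, and iterates; only $O(n^\delta)$ iterations are needed since each removes $\Omega(n^{2-\delta})$ edges. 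Your correct observation that zero-weight $4$-cycles are collision events for difference functions is essentially what the paper exploits in this second case, but through an algorithmic edge-removal argument rather than a probabilistic perturbation --- and that is where the real work is.
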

\begin{proof}
Suppose for the sake of contradiction that Exact Triangle instances on $n$-node graphs with \cref{prop:fewC4} can be solved in $O(n^{3-\eps})$ time for some $\eps > 0$. 

Given any Exact Triangle instance on $G = (V, E)$ (where $w_{ij} = w_{ji}$ for all edges $ij$), we can copy $V$ three times to $V_1, V_2, V_3$, add $E$ between any two copies of $V$, and direct the edges in the direction of $V_1 \rightarrow V_2 \rightarrow V_3 \rightarrow V_1$. This way, we can add in the edges in the reverse directions with negated weights. Hence, the Antisymmetry property is satisfied. 

Let $\delta > 0$ be some constant to be fixed later. We consider the following two cases:
\paragraph{The number of zero-weight $4$-cycles is at most $n^{4-\delta}$.} In this case, we randomly partition the vertices to reduce the problem to multiple smaller Exact Triangle instances, reducing the overall number of zero-weight $4$-cycles significantly while preserving all zero-weight triangles. 

Let $x > 0$ be a constant. We create $n^{1-x}$ buckets of vertices, and put each $v \in V$ into one of the buckets chosen uniformly at random. For every triple of buckets, we create an Exact Triangle instance on the subgraph induced by the vertices in the three buckets. Each instance contains $O(n^x)$ nodes w.h.p.\footnote{With high probability, i.e., probability $1-O(1/n^c)$ for an arbitrarily large constant $c$.} It suffices to solve all $O(n^{3-3x})$ instances of these small Exact Triangle instances. 

For any triple of buckets $B_1, B_2, B_3$, and any zero-weight $4$-cycle $(i, j, k, \ell)$ in the original graph with distinct $i, j, k, \ell$, $(i, j, k, \ell)$ appears in the small instance induced by $B_1, B_2, B_3$ with probability $\frac{1}{n^{4-4x}}$. Therefore, in expectation, the number of zero-weight $4$-cycles with distinct nodes in the instance induced by buckets $B_1, B_2, B_3$ is $O(n^{4x-\delta})$. By  sampling random distinct tuples of nodes $i, j, k, \ell$ in the small instance and testing whether they form a zero-weight $4$-cycle, we can estimate the number of zero-weight $4$-cycles with distinct vertices up to $n^{2x}$ additive error w.h.p.\ in $\tO(\frac{n^{4x}}{n^{2x}}) = \tO(n^{2x})$ time. If the estimated count is at least $2n^{2x}$, we use brute-force to solve this small instance in $O(n^{3x})$ time. Note that this case happens with probability $O(\frac{n^{4x-\delta}}{n^{2x}}) = O(n^{2x - \delta})$ by Markov's inequality. 
Otherwise, the number of zero-weight $4$-cycles in the instance is at most $O(n^{3x})$ w.h.p.\  ($O(n^{2x})$ from zero-weight $4$-cycles with distinct nodes, and $O(n^{3x})$ from zero-weight $4$-cycles with some repeated nodes). We can add some isolated nodes in the small instance to increase its number $n'$ of nodes  ($n'$ will still be $O(n^x)$), so that the number of zero-weight $4$-cycles can be bounded by $(n')^3$. Thus, we can call an algorithm for Exact Triangle with \cref{prop:fewC4} on this small instance, which runs in $O(n^{(3-\eps)x})$ time. 

Overall, the expected running time is 
$$O\left(n^{3-3x} \cdot \left(n^{2x-\delta} \cdot n^{3x} + n^{(3-\eps)x}\right) \right) = O(n^{3+2x-\delta} + n^{3-3\eps x}),$$
which is $O(n^{3-\frac{3\eps}{2+3\eps}\delta})$ by setting $x = \delta / (2+3\eps)$. 

\paragraph{The number of zero-weight $4$-cycles is more than $\Omega(n^{4-\delta})$.} In this case, 
there exists $(i_0, k_0) \in E$ that participates in $\Omega(n^{2-\delta})$ zero-weight $4$-cycles.
We can find such an $(i_0,k_0)$ by testing all $O(n^2)$ pairs.
To test a given pair $(i_0,k_0)$, we can estimate the number of $(i,k)$ such that $(i_0,k_0,i,k)$ form
a zero-weight $4$-cycle
with up to $o(n^{2-\delta})$ additive error w.h.p., by sampling random pairs $(i,k)$,\ in 
$\OO(\frac{n^2}{n^{2-\delta}})=\OO(n^\delta)$ time.  The total time is $\OO(n^{2+\delta})$.

For each $k$, define $r_k := w_{i_0, k} - w_{i_0, k_0}$. Define $E_0 := \{(i, k) \in E: w_{ik} - w_{i,k_0}=r_k\}$. Note that $E_0$ is exactly the set of $(i, k)$ where $(i_0, k_0, i, k)$ forms a  zero-weight $4$-cycle, because
$$w_{ik}-w_{i,k_0} = w_{i_0, k} - w_{i_0, k_0} \iff w_{i_0, k_0} + w_{k_0, i} + w_{ik} + w_{k, i_0} = 0$$
by the Antisymmetry property (and Fredman's trick). Thus, 
$|E_0| = \Omega(n^{2-\delta})$ w.h.p. 

We can test whether there is a zero-weight triangle using at least one edge in $E_0$ %
as follows. Say $w_{ik} + w_{kj} + w_{ji} = 0$ for some edge $(i, k) \in E_0$. The condition is equivalent to $w_{i,k_0}+r_k+w_{kj} + w_{ji} = 0$, which further is equivalent to $w_{i,k_0}+w_{ji} = -w_{kj}-r_k$. We create two matrices $A, B$ where  $A_{ij} := w_{i, k_0} + w_{ji}$ and $B_{jk} := -w_{kj} - r_k$. Now the problem becomes the following: for every $(i, k) \in E_0$, test whether there exists $j$ such that $A_{ij} = B_{jk}$. This can be solved by computing the so-called \emph{Equality Product} of the two matrices $A$ and $B$, which is known to be in $O(n^{(3+\omega)/2})$ time \cite{MatIPL} (see \cite{CVXstoc23} for more applications of Equality Product combined with Fredman's trick). After this step, we can remove $E_0$ from $E$, and by symmetry, we can also remove all edges that are reverses to the edges in $E_0$.  Then we can iteratively solve the Exact Triangle instance on the remaining graph. 

\paragraph{Putting things together. } The overall algorithm is as follows. It determines which case we are at by estimating the number of zero-weight $4$-cycles in $\tO(n^{\delta})$ time. We can call the second case at most $O(n^{\delta})$ expected number of times because each time we remove a subset of $\Omega(n^{2-\delta})$ edges w.h.p.
Also, we can call the first case at most once. Thus, the overall running time is 
$$\OO(n^{2\delta} + n^{3-\frac{3\eps}{2+3\eps}\delta} + n^\delta \cdot (n^{2+\delta}+n^{(3+\omega)/2})),$$
which is truly subcubic by setting $\delta$ properly (say $\delta = 0.1$). This would violate the Exact Triangle hypothesis, so Exact Triangle instances on $n$-node graphs with \cref{prop:fewC4} requires $n^{3-o(1)}$ time. 
\end{proof}

\section{All-Edges Sparse Triangle: First Alternative Proof}
\label{sec:aesparsetri1}
In this section, we present the first alternative proof for the lower bound of All-Edges Sparse Triangle under the  Exact Triangle hypothesis (from which one can then obtain conditional lower bounds for many data structure problems \cite{patrascu2010towards,abboud2014popular,KopelowitzPP16}). This proof is adapted from Chan, Vassilevska Williams and Xu's reduction from real-valued problems to All-Edges Sparse Triangle \cite{CVXstoc22}.  We combine it with an extra simple idea where we iteratively halve the weights.  The new reduction is entirely deterministic, unlike all previous reductions for All-Edges Sparse Triangle, and avoids any form of hashing. 

\begin{theorem}
Under the Exact Triangle hypothesis, All-Edges Sparse Triangle requires $m^{4/3-o(1)}$ time. 
\end{theorem}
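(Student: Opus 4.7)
The plan is to design a deterministic reduction from Exact Triangle on $n$ vertices (with integer weights in $[\pm W]$, $W = n^{O(1)}$) to a polylogarithmic number of All-Edges Sparse Triangle instances, each with $m = \Theta(n^2/s)$ edges, where $s$ is a parameter to be set to $n^{1/2}$. An algorithm solving AEST in $O(m^{4/3-\eps})$ time would then solve Exact Triangle in $\tilde{O}(s^{2/3+\eps}\cdot n^{8/3-2\eps}) = \tilde{O}(n^{3-3\eps/2})$ time, contradicting the Exact Triangle hypothesis.

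First I would tripartition the vertex set into $V_1,V_2,V_3$ of size $n$ each and view Exact Triangle as: find $(a,b,c) \in V_1 \times V_2 \times V_3$ with $w(a,b)+w(b,c)+w(c,a)=0$. Borrowing the Fredman-trick-based bucketing from \cite{CVXstoc22}, sort each of the three edge-weight families and partition each sorted list into $s$ contiguous buckets of roughly equal size. For each of the $O(s^2)$ \emph{compatible} bucket triples (once two buckets are chosen, the third is pinned down to $O(1)$ buckets by the sum-near-$0$ constraint), construct a tripartite graph whose edges are exactly the pairs in the three chosen buckets. Each such instance has $O(n^2/s)$ edges, and its triangles are precisely the bucket-compatible triples.

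A single round of bucketing only certifies that a candidate's weight sum lies in an interval of width $O(W/s)$ around $0$, not that the sum is exactly $0$. To pin things down exactly, I would \emph{iteratively halve} the weight range: after one round identifies the bucket-compatible triples, subtract the bucket representatives to obtain a residual Exact Triangle instance on the same vertex set but with weights in $[\pm W/s]$, and recurse. After $O(\log_s W) = O(1)$ rounds the weight range becomes $O(1)$ and the residual problem is handled by direct enumeration over the surviving triples; each round stays inside the $\tilde{O}(n^{3-3\eps/2})$ budget computed above.

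The main obstacle is that an AEST solver reports, per edge, only whether that edge lies in \emph{some} triangle, not which one, so we cannot directly ``extract'' the bucket-compatible triples to feed into the next round. To handle this, I would arrange the recursion to proceed not on a list of surviving triples but on the subgraph of edges \emph{flagged} as triangle-participating, iteratively peeling away edges whose triangle weights are already ruled out by coarse bucket sums. Showing that this flagged subgraph shrinks enough between rounds to keep the total edge budget at $O(n^2/s)$, while avoiding any form of random hashing, is the technical crux: the \cite{CVXstoc22} apparatus together with Fredman's trick supplies the sorted-bucketing that keeps compatible triples down to $O(s^2)$, and the halving is what converts their suboptimal real-valued $m^{5/4}$ bound into the tight $m^{4/3}$ integer bound.
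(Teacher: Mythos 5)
Your high-level plan — combine the sorted-bucketing / Fredman's-trick machinery of \cite{CVXstoc22} with an iterative halving of weights to upgrade their real-valued $m^{5/4}$ bound to the tight integer $m^{4/3}$ bound — is exactly the strategy of the paper's first alternative proof, and you are right that it avoids hashing and is deterministic. However, the mechanism you propose for carrying out the recursion has a genuine gap, and it is precisely the gap the paper's construction is designed to close.

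The first problem is a conflation of two incompatible bucketing schemes. You partition each sorted weight family into $s$ contiguous buckets of \emph{equal count} $n^2/s$, which is what gives you $m = O(n^2/s)$ edges per instance; but you then assert that after one round the weight sum is localized to an interval of width $O(W/s)$ and that subtracting bucket representatives yields a residual instance with weights in $[\pm W/s]$. That would require the buckets to have bounded \emph{weight range}, which equal-count buckets need not have (a single bucket can span almost all of $[\pm W]$ if the weights cluster). If you instead bucket by weight range, you lose the $O(n^2/s)$ sparsity bound. So ``subtract bucket representatives and recurse'' does not reduce the universe, and the claimed $O(\log_s W)=O(1)$ depth does not hold.

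The second problem is the one you flag yourself: an AEST oracle returns per-edge booleans, not triangle identities, so you cannot feed ``surviving bucket triples'' into the next round. Your proposed fix — recurse on the flagged-edge subgraph, peeling edges whose coarse bucket sums are ruled out — is not worked out, and I do not see how to make it work: a single edge $(a,b)$ can be flagged across $\Theta(s)$ different compatible bucket-triple instances, so the set of candidate bucket pairs per edge does not shrink, and the flagged subgraph does not become sparser in any useful sense. The paper resolves both issues at once with a structurally different recursion: it literally replaces every $w_{ij}$ by $\lfloor w_{ij}/2\rfloor$ (so the depth is $O(\log W)=O(\log n)$, not $O(1)$), and it solves a \emph{strengthened} problem in which the recursion must return, for every $(a,b)$, a single witness $k_{ab}\in C$ with $|w_{ab}+w_{b,k_{ab}}+w_{a,k_{ab}}|$ bounded by a small constant. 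This witness is what replaces your ``which triangle?'' information: with $k_{ab}$ in hand, partitioning edges into sets $L_{k,\delta}=\{(a,b): k_{ab}=k,\ w_{ab}+w_{bk}+w_{ka}=\delta\}$ (and then splitting these by count) cleanly separates the weight constraint (supplied by $\delta$, bounded because of the recursive guarantee) from the sparsity constraint (supplied by the count split). Fredman's trick then turns the per-edge test into a sparse triangle query with vertex set $C_s\times[\pm n^{O(1)}]$. The paper also works with an unbalanced instance $|A|=|B|=n$, $|C|=\sqrt n$, which is what makes the instance sizes and number of subproblems come out to $O(n^{1.5})$ edges and $O(n^{0.5+\eps})$ instances, giving the needed $n^{2.5-\Omega(\eps)}$ total. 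Without the witness-returning recursion and the unbalanced setup, the bookkeeping you sketch does not close.
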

\begin{proof}
Without loss of generality, we assume that the Exact-Triangle instance is on a tripartite graph with node partitions $A, B, C$ where $|A| = |B| = n$ and $|C| = \sqrt{n}$ and with weight function $w$. Such an instance requires $n^{2.5-o(1)}$ time under the Exact-Triangle hypothesis. 
We also solve a stronger version, where for every $(a, b) \in A \times B$, we need to find a $c_{ab} \in C$ where $|w_{ab}+w_{b,c_{ab}}+w_{a,c_{ab}}| \le 3$, if one exists. This version is stronger because if we multiply all initial edge weights by $4$, then finding such $(a, b, c_{ab})$ is equivalent to finding zero-weight triangles. 

First, we recursively solve the instance where the edge weights $w_{ij}$ are replaced with $\lfloor w_{ij} / 2 \rfloor$. The base case is when all the edge weights are in $[\pm O(1)]$. In this case, the instance can be solved in $\tO(n^\omega)$ time, by enumerating all triples of edge weights that sum up to $[\pm 3]$, and then find triangles with the corresponding edge weights using known methods for finding triangles in unweighted graphs \cite{AlonGMN92}. 

Suppose for every $(a, b)$, we already found $k_{ab}$ (via recursion) where $\left|\lfloor w_{ab} / 2 \rfloor + \lfloor w_{b,k_{ab}} / 2 \rfloor + \lfloor w_{a,k_{ab}} / 2 \rfloor \right| \le 3$ if one exists. Then we need to find $c_{ab}$ with $\left|w_{ab}+w_{b,c_{ab}}+w_{a,c_{ab}}\right| \le 3$ if one exists. 

Let $L_{k,\delta} := \{(a, b) \in A \times B: k_{ab} = k, w_{ab}+w_{bk}+w_{ka} = \delta\}$ for $k \in C$ and $\delta \in \{-6, \ldots, 9\}$. If some $L_{k,\delta}$ has size greater than $n^{1.5}$, we split it to $O(\frac{|L_{k,\delta}|}{n^{1.5}})$ subsets of size $O(n^{1.5})$. The total number of subsets will remain $O(|C|+\frac{n^2}{n^{1.5}})=O(n^{0.5})$. We also split the set $C$ into $n^\eps$ equally-sized subsets $\left\{C_s\right\}_{s \in [n^\eps]}$. For each subset $L = L_{k, \delta}$, each $\delta' \in \{-3, \ldots, 3\}$, and each subset $C_s$, we create the following All-Edges Sparse Triangle instance:

\begin{itemize}
    \item The vertex sets are $A', B', U$, where $A'$ is a copy of $A$, $B'$ is a copy of $B$, and $U$ is $C_s \times [\pm n^{O(1)}]$ (sparsely represented). 
    \item For every $(a, b) \in L$, add an edge between $a \in A'$ and $b \in B'$. 
    \item For every $(a, c) \in A \times C_s$, add an edge between $a$ and $(c, w_{ac}-w_{ak}+\delta - \delta')$. 
    \item For every $(b, c) \in B \times C_s$, add an edge between $b$ and $(c, w_{bk}-w_{bc})$. 
\end{itemize}

Then we detect for every edge $(a, b)$ in the graph, whether it is contained in a triangle. If so, we exhaustively search $c_{ab}$ in $C_s$ (for each edge $(a, b)$, we only perform this exhaustive search once). 

\paragraph{Correctness.} 

First, we show that for any triangle $(a, b, (c, u))$ that is in any of the created All-Edges Sparse Triangle instances, $c$ is a valid candidate for $c_{ab}$. This is because $w_{ac}-w_{ak}+\delta - \delta' = u = w_{bk}-w_{bc}$, which implies $w_{ab}+w_{bc}+w_{ca} = \delta'+(w_{ab}+w_{bk}+w_{ak} - \delta)$ (by Fredman's trick). 
Now, $w_{ab}+w_{bk}+w_{ak} - \delta = 0$ because any edge $(a, b)$ in this particular All-Edges Sparse Triangle instance has $w_{ab}+w_{bk}+w_{ak} = \delta$ by the definition of $L_{k, \delta}$. Therefore, $w_{ab}+w_{bc}+w_{ca} = \delta' \in [\pm 3]$ and thus $c$ is a valid candidate for $c_{ab}$.

Then we show that if $c \in C_s$ is a valid candidate for $c_{ab}$, then $(a, b, (c, u))$ for some $u$ is a triangle in some of the created All-Edges Sparse Triangle instances. 
Because $c$ is a valid candidate for $c_{ab}$, and 
$$\frac{1}{2} \left(w_{ab}+w_{bc}+w_{ca} \right) - \frac{3}{2} \le \lfloor w_{ab} / 2 \rfloor + \lfloor w_{bc} / 2 \rfloor + \lfloor w_{ca} / 2 \rfloor \le \frac{1}{2} \left(w_{ab}+w_{bc}+w_{ca} \right).$$
we have $-3 \le \lfloor w_{ab} / 2 \rfloor + \lfloor w_{bc} / 2 \rfloor + \lfloor w_{ac} / 2 \rfloor \le \frac{3}{2} \le 3$, so $k_{ab}$ must also exist (in particular, it can take value $c$, but it can be any other valid value as well). Therefore, $(a, b) \in L_{k_{ab}, \delta}$, where $\delta = w_{ab}+w_{a,k_{ab}}+w_{b,k_{ab}} \in \{-6, \ldots, 9\}$. Also, let $\delta' = w_{ab}+w_{ac}+w_{bc}$. 
The edge $(a, b)$ will be included in the All-Edges Sparse Triangle instance created for  $L = L_{k_{ab}, \delta}$, $\delta'$ and $C_s$. Because $w_{ab}+w_{ak}+w_{bk} - \delta = 0 = w_{ab}+w_{ac}+w_{bc} - \delta'$, we have $w_{ac} - w_{ak} + \delta - \delta' = w_{bk}-w_{bc}$, so we will find a triangle for the edge $(a, b)$ (in particular, $(a, b, (c, w_{bk}-w_{bc}))$ is a valid triangle).

\paragraph{Running time.} 
Suppose the running time for All-Edges Sparse Triangle is $O(m^{4/3-\eps})$ for some $\eps > 0$, then the overall running time can be bounded as follows. The time for handling the base case is $\tO(n^\omega)$. In each recursive step, we need to solve $O(n^{0.5 +\eps})$ instances of All-Edges Sparse Triangle, which run in  $\tO(n^{0.5+\eps} \cdot (n^{1.5})^{4/3-\eps})=\tO(n^{2.5-0.5\eps})$ time. For every edge $(a, b)$, we also need to spend time performing the exhaustive search, which run in $\tO(n^{2} \cdot n^{0.5-\eps}) = O(n^{2.5-\eps})$ time. Overall, the running time is $\tO(n^\omega+n^{2.5 - 0.5\eps})$, which would violate the Exact Triangle hypothesis. 
\end{proof}

\section{All-Edges Sparse Triangle: Second Alternative Proof}
\label{sec:aesparsetri2}

In this section, we  present the second alternative proof for the lower bound of All-Edges Sparse Triangle under the  Exact-Triangle hypothesis, which is based on an idea from the recent conditional lower bound of Sparse Triangle Detection under the strong $3$SUM hypothesis by Jin and Xu \cite{JinXstoc23}. 

Let $\ExactTri(n\mid W)$ denote the time complexity of the problem of detecting a zero-weight triangle in an $n$-node tripartite graph with edge weights in $[\pm W]$.
Let $\ExactTriList(n,t\mid W)$ denote the time complexity of the problem of listing all $t$
zero-weight triangles in an $n$-node tripartite graph with edge weights in $[\pm W]$ (note that the value of $t$ is not given in advance).

Let $\SparseTri(m)$, $\AESparseTri(m)$, and $\ANSparseTri(m)$ denote the time complexity of the Sparse Triangle Detection, All-Edges Sparse Triangle, and All-Nodes Sparse Triangle problems respectively for an $m$-edge unweighted tripartite graph.\footnote{As a reminder, in the Sparse Triangle Detection problem, we need to determine whether a given graph contains a triangle; in the All-Nodes Sparse Triangle problem, we need to decide whether each node in a given graph is contained in a triangle. }
Let $\SparseTriList(m,t)$ denote the time complexity of the problem of listing all $t$ triangles in an $m$-edge unweighted tripartite graph.

Our main idea for reducing Exact Triangle to Sparse Triangle Listing is encapsulated in the proof of the following lemma, which is short and simple.

\begin{lemma}\label{lem:reduce1}
$\ExactTriList(n,t\mid W)\le O(1)\cdot \SparseTriList(n^2W^{1/3},t)$.
\end{lemma}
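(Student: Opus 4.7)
The plan is to give an explicit reduction producing an unweighted tripartite graph $G'$ on $O(n^2 W^{1/3})$ edges whose triangles are in bijection with the zero-weight triangles in the given Exact Triangle instance; a constant number of calls to $\SparseTriList$ on $G'$ then yields the claimed bound.

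Setting $L := W^{1/3}$, I would decompose each weight $w \in [-W,W]$ uniquely as $w = pL + q$ with $q \in [0,L)$ and $p \in [-L^2,L^2)$. The condition $w_{ab}+w_{bc}+w_{ca}=0$ becomes $(p_{ab}+p_{bc}+p_{ca})L + (q_{ab}+q_{bc}+q_{ca})=0$, which (since the $q$-sum lies in $[0,3L)$ and must be a multiple of $L$) is equivalent to the existence of a unique carry $k \in \{0,1,2\}$ such that $q_{ab}+q_{bc}+q_{ca}=kL$ and $p_{ab}+p_{bc}+p_{ca}=-k$. Partitioning the zero-weight triangles by $k$, I would construct three Sparse Triangle Listing instances $G'_0,G'_1,G'_2$ with $O(n^2 L)$ edges each, whose triangles bijectively enumerate the corresponding classes; invoking $\SparseTriList$ on each and concatenating outputs gives total time $O(1)\cdot\SparseTriList(n^2 W^{1/3},t)$.

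For each $k$, the first step of constructing $G'_k$ would take parts $X=A$, $Y = B \times [0,L)$, $Z = C \times [0,2L)$ with integer-valued labels on $Y$ and $Z$ threading the running $q$-sum around each cycle: the $X\times Y$ edge $(a,(b,\beta))$ exists iff $\beta = q_{ab}$; the $Y\times Z$ edge $((b,\beta),(c,\gamma))$ iff $\gamma = \beta + q_{bc}$; and the closing $X\times Z$ edge $(a,(c,\gamma))$ iff $\gamma = kL - q_{ca}$. Composing these three constraints along a triangle forces $q_{ab}+q_{bc}+q_{ca}=kL$ exactly, and the dominant $Y\times Z$ side contributes $O(n^2 L)$ edges.

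The main obstacle is that this base construction enforces only the $q$-sum condition; triangles in $G'_k$ so far correspond to \emph{all} $(a,b,c)$ with $q$-sum $= kL$, which can be as many as $\Omega(n^3/L)$ rather than the $t_k$ actual zero-weight triples. To fix this, I would augment the vertex labels to simultaneously encode a ``digest'' of the $p$-sum while keeping the total label space per vertex at $O(L)$. A direct integer threading of $p$ would need labels of size $\Omega(L^2)$ and is too expensive, but the target $p$-sum equals the small integer $-k$, so much less information is required: in particular, threading $p\bmod L$ (a necessary condition, since $-k\equiv -k\pmod L$) together with an $O(1)$-sized ``carry-of-carry'' label that tracks $\lfloor p\text{-sum}/L\rfloor$ should be sufficient to pin down $p$-sum $=-k$ exactly. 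Making this coupled labelling work---so that each vertex has label space of size only $O(L)$, the edge count remains $O(n^2 L)$, and the triangles biject with zero-weight triples of carry $k$---is the main technical step of the plan, and I expect it will be where Fredman's trick or a similar algebraic rearrangement of $p_{ab}+p_{bc}+p_{ca}=-k$ enters.
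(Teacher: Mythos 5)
Your warm-up is fine: the two-digit decomposition $w=pL+q$ with a carry $k\in\{0,1,2\}$, and the running-sum threading of the $q$-digit through $Y$ and $Z$, do give $O(n^2L)$ edges and enforce $q_{ab}+q_{bc}+q_{ca}=kL$. But the ``main technical step'' you flag is a genuine gap, and the fix you sketch does not close it. If you thread $p\bmod L$ as a second running sum in the same direction, then each vertex of $Y$ (and $Z$) carries a pair of labels in $[0,L)\times[0,L)$, so a single pair $(b,c)$ already contributes $\Theta(L^2)$ edges between $Y$ and $Z$ (one per choice of incoming label), blowing the edge count up to $\Theta(n^2L^2)=\Theta(n^2W^{2/3})$. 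Moreover, even if you could afford to thread $p\bmod L$, you would still only learn $\sum_\text{cycle} p \bmod L$ together with $\lfloor(\sum p\bmod L)/L\rfloor$; pinning down $\sum p=-k$ exactly additionally requires the high digit $\sum\lfloor p/L\rfloor$, which ranges over $\Theta(L)$ values and is not captured by an $O(1)$-size carry label. Finally, the design goal ``label space $O(L)$ per vertex'' is a red herring: the parameter that matters for $\SparseTriList$ is the edge count, and there is no obstruction to vertex parts of size $\Theta(nL^2)$ as long as the number of (non-isolated) vertices stays bounded by $O(m)$.

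The paper sidesteps all of this by using a genuinely different gadget. It decomposes each weight into \emph{three} digits in $[\pm q]$ with $q=W^{1/3}$ (handling all carry patterns by an $O(1)$-size shift set $\Delta$ and a constant number of instances), and then uses \emph{static} node labels rather than running sums: a node $a$ is labeled by two digits, one drawn from the weight of its outgoing edge $ab$ and one from its incoming edge $ca$, arranged cyclically so that for each edge one digit comes from each endpoint's label and the third digit of that edge's weight is \emph{forced} to equal the negated sum of the corresponding digits of the other two edges. This enforces all three digit-sum equations simultaneously, with the edge between $(a,x_1,z_3)$ and $(b,y_3,x_2)$ existing iff $ab\in G$ has weight $(x_1,x_2,-y_3-z_3)$ (and analogously around the cycle), giving $O(n^2q)=O(n^2W^{1/3})$ edges while the label space per vertex is $O(q^2)$, which is harmless. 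If you want to rescue your threading viewpoint, you essentially have to convert it into this static, cyclic labeling, at which point you have reproduced the paper's construction.
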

\begin{proof}
Exact Triangle Listing for weights in $[\pm W]$ reduces to $O(1)$
instances of Exact Triangle Listing for weights in $[\pm q]^3$ with $q:=W^{1/3}$, since
we can map any number $x=x_1q^2+x_2q+x_3\in [\pm W]$ with $x_1\in [\pm q]$, $x_2,x_3\in\{0,\ldots,q-1\}$ to the triple $\varphi(x)=(x_1,x_2,x_3)\in [\pm q]^3$; then $x+y+z=0$ iff $\varphi(x)+\varphi(y)+\varphi(z)\in\Delta$
for a constant-size set $\Delta=\{(0,0,0),(0,-1,q),(0,-2,2q)\}+\{(0,0,0),(-1,q,0),(-2,2q,0)\}$. For every $\delta \in \Delta$, we need to solve an instance where the target weight of the triangle is $\delta$; by subtracting $\delta$ from all edges in one of the edge parts, the target can become $0$ again (this will only increase the edge weight bound by a constant factor). 

Let $G$ be an $n$-node tripartite graph with node partition $A,B,C$ and edge weights in $[\pm q]^3$.   
Construct a new unweighted tripartite graph $G'$ with node partition $A\times [\pm q]^2$,
$B\times [\pm q]^2$, $C\times [\pm q]^2$:
\begin{itemize}
\item Add an edge between $(a,x_1,z_3)\in A\times [\pm q]^2$ and $(b,y_3,x_2)\in B\times [\pm q]^2$
iff $ab$ is an edge in $G$ with weight $(x_1,x_2,-y_3-z_3)$.
\item Add an edge between $(b,y_3,x_2)\in B\times [\pm q]^2$ and $(c,z_2,y_1)\in C\times [\pm q]^2$
iff $bc$ is an edge in $G$ with weight $(y_1,-x_2-z_2,y_3)$.
\item Add an edge between $(c,z_2,y_1)\in C\times [\pm q]^2$ and $(a,x_1,z_3)\in A\times [\pm q]^2$
iff $ca$ is an edge in $G$ with weight $(-x_1-y_1,z_2,z_3)$.
\end{itemize}

The number of edges in $G'$ is $O(n^2q)=O(n^2W^{1/3})$ (for example, to count the number of edges added in the first bullet, note that there are $O(q)$ ways to write a given number in $[\pm q]$ as $-y_3-z_3$).  
Furthermore, 
$(a,x_1,z_3),(b,y_3,x_2),(c,z_2,y_1)$ form a triangle in $G'$ iff $a,b,c$ form a triangle in $G$ whose edges $ab$, $bc$, $ca$ have
weights $(x_1,x_2,x_3)$, $(y_1,y_2,y_3)$, $(z_1,z_2,z_3)$ summing to $(0,0,0)$.
Thus, we can solve Exact Triangle Listing on $G$ by solving Triangle Listing on $G'$.
\end{proof}

For the above lemma to be effective, the universe size $W$ needs to be subcubic. 
To lower the universe size, we use a standard 
hashing trick described in the next lemma.  We only need the simplest type of linear hash functions (mod $p$ for a random prime $p$), and this is the only place in this reduction where randomization is used. 

\begin{lemma}\label{lem:reduce2}
$\ExactTri(n\mid n^{O(1)}) \le \OO(1)\cdot \ExactTriList(n,\OO(t)\mid n^3/t)$ for any $t$.
\end{lemma}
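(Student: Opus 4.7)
The plan is to apply the classical random-prime hashing trick to reduce the weight universe of an Exact Triangle instance, while controlling the number of ``false positive'' zero-weight triangles that appear after modding.

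First, I would pick a prime $p$ uniformly at random from the primes in $[P,2P]$, where $P := C\,(n^3/t)\log^2 n$ for a sufficiently large constant $C$; by the prime number theorem a random prime in this range can be found in $\OO(1)$ time. Replace every edge weight $w$ by its representative of $w \bmod p$ in $[\pm p/2]$; the new instance is an Exact Triangle instance on the same graph with weights in $[\pm O(n^3/t)]$, as required.

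Next I would analyse which triangles survive. For any fixed triple $(a,b,c)$ in the original graph, let $s := w_{ab}+w_{bc}+w_{ca}\in[\pm n^{O(1)}]$. The triangle becomes zero-weight in the modded instance iff $p\mid s$. Trivially, every true zero-weight triangle ($s=0$) survives. For $s\neq 0$, since $|s|\le n^{O(1)}$, the number of distinct prime divisors of $s$ that are $\ge P$ is at most $O(\log n/\log P)=O(1)$, whereas the number of primes in $[P,2P]$ is $\Theta(P/\log P)$; hence $\Pr[p\mid s]\le O(\log P/P)$. Summing over all $O(n^3)$ triples, the expected number of false-positive zero-weight triangles in the modded instance is
\[
O\!\left(\tfrac{n^3 \log P}{P}\right) \;=\; O(t/C),
\]
so by Markov's inequality, with probability at least $1/2$ the modded instance contains at most $\OO(t)$ zero-weight triangles in total.

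Finally, I would invoke the Exact Triangle Listing algorithm on the modded instance with a cutoff of $\ExactTriList(n,\OO(t)\mid n^3/t)$ time; if the cutoff is exceeded, abort and re-sample a fresh prime. After $O(\log n)$ independent retries, some attempt succeeds with high probability. For every listed triangle, verify in $O(1)$ time whether its \emph{original} edge weights sum to zero; answer YES iff some listed triangle passes this check. Correctness follows from the observation in the previous paragraph that no true zero-weight triangle is missed, and that every false positive is weeded out by the verification step. The total running time is $\OO(1)\cdot\ExactTriList(n,\OO(t)\mid n^3/t)$ as claimed. No step is a real obstacle; the only slightly delicate point is the ``distinct large prime factors'' count used in the probability bound, which is standard.
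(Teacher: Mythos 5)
Your overall approach---random prime hashing to shrink the weight universe, then listing and filtering out false positives by rechecking against the original weights---is the same as the paper's. However, there is a real gap in the final step. You invoke Markov's inequality to claim that ``with probability at least $1/2$ the modded instance contains at most $\OO(t)$ zero-weight triangles in total,'' but Markov only controls the number of \emph{false positives} (triangles with nonzero original weight that become zero modulo $p$); the number of \emph{true} zero-weight triangles is deterministic and can be as large as $\Theta(n^3)$. So when the answer is YES and the original instance has many zero-weight triangles, the modded instance has far more than $\OO(t)$ zero-weight triangles for \emph{every} choice of prime, your cutoff is exceeded on every attempt, and the retry loop (``re-sample a fresh prime\dots\ some attempt succeeds with high probability'') never succeeds---your algorithm as written would produce no output, or an incorrect one, in exactly the YES instances it is supposed to recognize. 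The paper's fix is to treat a timeout as evidence for YES: whenever the listing exceeds the allotted time budget, output YES. This gives a one-sided-error Monte Carlo algorithm (a false YES on a timeout happens only when the answer is NO, with probability at most a constant), which can be boosted to high probability by $O(\log n)$ repetitions.

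Two smaller quibbles: your choice $P = C(n^3/t)\log^2 n$ yields weights bounded by $\OO(n^3/t)$ rather than the $n^3/t$ appearing in the lemma statement (the paper samples $p$ directly from $[n^3/(2t), n^3/t]$, and the resulting $O(t\log n)$ expected false positives are still absorbed into $\OO(t)$); and your step asserting $O(\log n/\log P)=O(1)$ is not uniform in $t$ (it can fail when $n^3/t$ is subpolynomial), though the paper's more careful calculation shows the $\log(n^3/t)$ factors cancel in any case, giving the same $O((t/n^3)\log n)$ bound per triangle.
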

\begin{proof}
Pick a random prime $p$ in $[n^3/(2t), n^3/t]$.
To solve Exact Triangle for a given weighted graph,
we take the weights mod $p$ and solve Exact Triangle Listing to find all
triangles with weights congruent to $0 \bmod p$.  The answer is yes iff
one of these triangles actually has zero weight.
For a fixed triangle with nonzero weight, the probability that its weight is congruent to
$0\bmod p$ is $O((t/n^3)\log n)$,
since there are $O(\frac{n^3/t}{\log (n^3/t)})$ primes in $[n^3/(2t), n^3/t]$, and 
a fixed number in $[n^{O(1)}]$ has at most $O(\frac{\log n}{\log(n^3/t)})$ prime divisors in this range.  Thus, the expected number of triangles with nonzero weights congruent to $0\bmod p$ (i.e., number of false positives) is $O(t\log n)$.  Thus, if the process takes
more than $\ExactTriList(n,\OO(t)\mid n^3/t)$ time, we can terminate and infer that the answer is yes with probability $\Omega(1)$ (or w.h.p.\ by repeating logarithmically many times).
\end{proof}

The main result now immediately follows:

\begin{theorem}\label{thm:tri:list}
Under the Exact Triangle hypothesis, All-Edges Sparse Triangle requires $m^{4/3-o(1)}$ time. 
\end{theorem}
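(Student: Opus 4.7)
The plan is to chain Lemmas~\ref{lem:reduce1} and~\ref{lem:reduce2} and then convert Sparse Triangle Listing into All-Edges Sparse Triangle, transferring any hypothetical $m^{4/3-\eps}$ algorithm for $\AESparseTri$ back to a truly subcubic algorithm for $\ExactTri$. Starting from Lemma~\ref{lem:reduce2} with parameter $t$, and then applying Lemma~\ref{lem:reduce1} with weight bound $W = n^3/t$, gives for any $t$
\[
\ExactTri(n\mid n^{O(1)}) \;\le\; \tO(1)\cdot\SparseTriList\bigl(n^2(n^3/t)^{1/3},\,\tO(t)\bigr) \;=\; \tO(1)\cdot\SparseTriList\bigl(n^3/t^{1/3},\,\tO(t)\bigr).
\]

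To get back to $\AESparseTri$, I would invoke the known randomized equivalence between Sparse Triangle Listing and All-Edges Sparse Triangle due to Duraj, Kleer, Kociumaka, and Wojtaszczyk~\cite{DurajK0W20}, which gives $\SparseTriList(m,t)\le\tO(\AESparseTri(m))$ whenever $t\le\tO(m)$. To apply this cleanly, I would choose $t$ so that the triangle count $\tO(t)$ matches the edge count $n^3/t^{1/3}$ up to polylogs, which balances at $t = n^{9/4}$; with this choice $m = n^{9/4} = t$, and the chain simplifies to
\[
\ExactTri(n\mid n^{O(1)}) \;\le\; \tO\bigl(\AESparseTri(n^{9/4})\bigr).
\]

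Finally, suppose for contradiction that $\AESparseTri(m) = O(m^{4/3-\eps})$ for some constant $\eps > 0$. Plugging in $m = n^{9/4}$ gives $\ExactTri(n\mid n^{O(1)}) = \tO\bigl(n^{(9/4)(4/3-\eps)}\bigr) = \tO(n^{3-9\eps/4})$, which is truly subcubic and hence contradicts Hypothesis~\ref{hypo:exacttri}. There is no real obstacle at this stage: the substantive work is done by the two preceding lemmas (Lemma~\ref{lem:reduce1} shrinks the universe via coordinate splitting, and Lemma~\ref{lem:reduce2} shrinks it further via random-prime hashing), and what remains is exponent arithmetic plus a black-box appeal to the known randomized triangle-listing/all-edges equivalence. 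The only mildly delicate point is checking that the balance $t \le \tO(m)$ is consistent with the range of $W$ needed in Lemma~\ref{lem:reduce2}, but at $t=n^{9/4}$ we have $n^3/t = n^{3/4} \ge 1$, so the setup is valid.
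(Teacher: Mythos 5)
Your proposal is correct and follows essentially the same approach as the paper: combine Lemmas~\ref{lem:reduce1} and~\ref{lem:reduce2}, balance by choosing $t=n^{9/4}$ so the edge count and triangle count both equal $n^{9/4}$, and then invoke the reduction of Duraj et al.~\cite{DurajK0W20} from Sparse Triangle Listing to All-Edges Sparse Triangle. The exponent arithmetic $(9/4)(4/3-\eps)=3-(9/4)\eps$ matches the paper exactly.
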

\begin{proof}
We first show that for listing all $t$ triangles in an $m$-edge graph when $t=\Theta(m)$ requires $m^{4/3-o(1)}$ time, under the Exact Triangle hypothesis.

Combining Lemmas~\ref{lem:reduce1} and \ref{lem:reduce2} gives
\begin{eqnarray*}
\ExactTri(n\mid n^{O(1)}) &\le& \OO(1)\cdot \SparseTriList(n^3/t^{1/3},\OO(t))\\
&\le& \OO(1)\cdot \SparseTriList(n^{9/4},\OO(n^{9/4}))) \qquad\mbox{by setting $t=n^{9/4}$}\\
&\le& \OO(n^{3-(9/4)\eps}) \qquad\mbox{if $\SparseTriList(m,m)\le O(m^{4/3-\eps})$}.
\end{eqnarray*}

To prove hardness for All-Edges Sparse Triangle instead of Sparse Triangle Listing, we apply a known (simple) reduction from Sparse Triangle Listing to All-Edges Sparse Triangle from \cite[Theorem 1.4]{DurajK0W20}. 
\end{proof}

\subsection{Discussion and Further Consequences}

Monte Carlo randomization is used in the proof of Lemma~\ref{lem:reduce2}.  With a little more effort, it is possible to modify it to require only Las Vegas randomization.%
\footnote{For example, in the proof of Lemma~\ref{lem:reduce2},
when the process takes more than the allotted time, we return ``not sure'' instead of yes.  If there is at most one zero-weight triangle, the probability of returning ``not sure'' is small.  By standard random-sampling arguments, we can run the algorithm on logarithmically many subgraphs where one of the subgraphs has a unique zero-weight triangle with good probability if there exists a zero-weight triangle.  If ``not sure'' is reported for all such subgraphs, we restart from scratch.}  However, it is not clear how to completely derandomize this reduction without slow-down.
If a deterministic reduction is desired, see our alternative proof in 
\cref{sec:aesparsetri1}.

Compared to Vassilevska Williams and Xu's reduction~\cite{williamsxumono} from Exact Triangle to Sparse Triangle Listing or All-Edges Sparse Triangle, the new reduction is not only simpler, but also
better in a technical sense.  Their reduction \cite[Theorem~3.4]{williamsxumono} (also randomized) basically shows that 
\[ \ExactTri(n\mid n^{O(1)})\le \OO(n^{2\rho})\cdot \SparseTriList(n^{2-\rho},n^{3-3\rho}).\]
Setting $\rho=1/2$ gives 
$\ExactTri(n\mid n^{O(1)})\le \OO(n)\cdot \SparseTriList(n^{3/2},n^{3/2})$,
which is $\OO(n^{3-(3/2)\eps})$ if $\SparseTriList(m,m)\le O(m^{4/3-\eps})$.
In contrast, the proof of Theorem~\ref{thm:tri:list} provides a slightly better bound of 
$\OO(n^{3-(9/4)\eps})$.  

As an application, we derive a new conditional lower bound for All-Nodes Sparse Triangle, a natural problem ``in between'' All-Edges Sparse Triangle and Sparse Triangle Detection.
(The previous reduction could also be used in combination of Lemma~\ref{lem:AN} below, but the dependence on $\eps$ would again be worse.)

\newcommand{\dmax}{d_{\mbox{\scriptsize\rm max}}}

Let $\SparseTriList(m,t,\dmax)$ denote the time complexity of the problem of listing all $t$ triangles in an $m$-edge unweighted tripartite graph with maximum degree $\dmax$.

\begin{lemma}\label{lem:reduce1:deg}
$\ExactTriList(n,t\mid W)\le \OO(1)\cdot \SparseTriList(n^2W^{1/3},t,O(n/W^{1/3}))$.
\end{lemma}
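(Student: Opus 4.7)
The plan is to extend the construction of Lemma~\ref{lem:reduce1} with a chunking step that bounds the maximum degree. Recall that in that construction, the $A$-side node $(a,x_1,z_3)\in A\times[\pm q]^2$ (with $q = W^{1/3}$) has $B$-degree $d_{a,x_1} := |\{b : ab\in G,\ w_{ab,1}=x_1\}|$ and $C$-degree $d'_{a,z_3} := |\{c : ca\in G,\ w_{ca,3}=z_3\}|$, and analogous expressions hold on the $B$- and $C$-sides. Since $\sum_{x_1} d_{a,x_1}\le n$ and similarly for each of the six such ``buckets,'' these degrees are $O(n/q)$ on average but can be as large as $n$ in the worst case, whereas the lemma requires $\dmax = O(n/q) = O(n/W^{1/3})$.

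For every bucket of size exceeding $n/q$, I would partition it into chunks of size at most $n/q$ and label each element by its chunk index; for instance, $\{b : ab\in G,\ w_{ab,1}=x_1\}$ splits into $k^B_{a,x_1} := \lceil q\,d_{a,x_1}/n\rceil$ chunks, giving each such $b$ a label $j^B_a(b)$, and I would do the same for the other five bucket types (for $(a,z_3)$, $(b,x_2)$, $(b,y_3)$, $(c,y_1)$, $(c,z_2)$). Then I would replace each $A$-side node $(a,x_1,z_3)$ by $k^B_{a,x_1}\cdot k^C_{a,z_3}$ copies $(a,x_1,z_3,j,\ell)$, and similarly on the $B$- and $C$-sides. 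Using $\sum_{x_1} k^B_{a,x_1}\le\sum_{x_1}(q\,d_{a,x_1}/n+1)=O(q)$ and the analogous bound for $k^C$, the total $A$-side node count stays $\sum_a O(q)\cdot O(q) = O(nq^2)$, matching Lemma~\ref{lem:reduce1}. Edges are then defined exactly as in Lemma~\ref{lem:reduce1}, with the extra constraint on both endpoints that the chunk label agrees with where the opposite endpoint lies in the relevant bucket: the $A$-$B$ edge between $(a,x_1,z_3,j,\ell)$ and $(b,y_3,x_2,i,\ell')$ is added iff the Lemma~\ref{lem:reduce1} condition holds, $j = j^B_a(b)$, and $i = j^A_b(a)$. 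This immediately forces every $A$-copy's $B$-degree to be at most the chunk size $n/q$, and symmetrically bounds every other directed degree by $n/q$.

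The main obstacle is handling the remaining ``free'' chunk labels: in the $A$-$B$ edge above, $\ell$ and $\ell'$ are $C$-chunk indices of $a$ and $b$ that are not determined by the edge $ab$ itself, and naively adding an edge for every $(\ell,\ell')$ pair would multiply the total edge count by an extra factor of $q$. Inspired by the Sparse Triangle Detection construction of~\cite{JinXstoc23}, the plan is to couple these free labels via a canonical routing---tying $(\ell,\ell')$ to the ``third-party'' $C$-vertex of any triangle the edge would participate in, and symmetrically for the free labels on the $A$-$C$ and $B$-$C$ edges---so that each original zero-weight triangle $(a,b,c)$ in $G$ lifts to exactly one triangle in the refined graph without inflating the edge count. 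Once this canonical routing is pinned down, the edge count $O(n^2 W^{1/3})$, maximum degree $O(n/W^{1/3})$, and the bijection between zero-weight triangles in $G$ and triangles in the new graph all follow from the same bookkeeping used in Lemma~\ref{lem:reduce1}.
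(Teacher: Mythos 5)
Your chunking plan runs into a genuine wall at exactly the place you flagged, and the ``canonical routing'' you gesture at cannot be pinned down. The $A$--$B$ edge in the refined graph must carry chunk labels $\ell$ (a $C$-chunk of $a$) and $\ell'$ (a $C$-chunk of $b$), and for the degree bound to hold it must connect only to $C$-copies whose labels agree. But which $C$-chunk is relevant depends on the third vertex $c$ of a triangle through $ab$, and a single edge $ab$ may lie in many zero-weight triangles with $c$'s falling in many different chunks. You therefore face a dilemma: either you duplicate the $A$--$B$ edge once per compatible $(\ell,\ell')$ pair, which inflates the edge count by a factor of up to $q^2$ and ruins the $O(n^2W^{1/3})$ bound; or you pick one $(\ell,\ell')$ per edge, which silently discards triangles whose $c$ lands in the unchosen chunks. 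There is no ``routing'' choice that preserves all triangles without first knowing, for each edge, which $c$'s complete it --- but computing that is the problem we are reducing to. (The analogous construction in~\cite{JinXstoc23} does not have this issue because there the third coordinate in the universe is hashed independently, not tied to the identity of a specific opposite vertex.)

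The paper takes a quite different, and simpler, route that avoids this obstacle entirely: it leaves the construction of Lemma~\ref{lem:reduce1} untouched and instead randomizes the weights. Pick a random shift $h:A\cup B\cup C\to[q]^3$ and replace $w_{uv}$ by $\hat w_{uv}=w_{uv}+h(v)-h(u)$, which leaves every triangle weight unchanged. Now the first component of $\hat w_{ab}$ is uniform over $[q]$ (conditioned on $h(a)$), so the expected $B$-degree of $(a,x_1,z_3)$ in $G'$ is at most $n/q$, and similarly for the other five degree types. One then simply deletes all nodes of degree more than $6n/q$; by Markov each fixed node survives with probability at least $5/6$, so each fixed triangle survives with probability at least $1/2$, and repeating $O(\log n)$ times lists all triangles w.h.p. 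This uses randomization but sidesteps the need to ``commit'' any edge to a chunk of the opposite side. If you want to salvage a deterministic argument, you would need a mechanism that bounds degrees without requiring an edge to know the third vertex of its triangles, which your current chunking scheme does not provide.
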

\begin{proof}
As in the proof of Lemma~\ref{lem:reduce1}, we may assume the edge weights are in $[\pm q]^3$ with $q:=W^{1/3}$. 
Pick a random function $h:A\cup B\cup C\rightarrow [q]^3$.
For each edge $uv$, replace the weight $w_{uv}$ by $\hat{w}_{uv} = w_{uv} + h(v) - h(u)\in [\pm 2q]^3$.
Clearly, the weight of each triangle remains unchanged.
Now, apply the same construction of $G'$ as in the proof of Lemma~\ref{lem:reduce1} (with $q$ adjusted by a factor of $2$).

Consider a fixed node $(a,x_1,z_3)\in A\times [\pm 2q]^2$, and a fixed $b\in B$.
Then $(a,x_1,z_3)$ has a (unique) neighbor $(b,y_3,x_2)$ in $G'$ for some $y_3$ and $x_2$
only if the first component of $\hat{w}_{ab}$ is equal to $x_1$, i.e.,
$h(b)$ is equal to $h(a)-w_{ab} + x_1$ in the first component.  
This holds with probability at most $1/q$, since the first component of $h(b)$
is uniformly distributed in $[q]$ conditioned to any fixed $h(a)$.
For a fixed $c\in C$,
the probability that $(a,x_1,z_3)$ has a (unique) neighbor $(c,z_2,y_1)$ for some $z_2$ and $y_1$ is similarly at most $1/q$.
Thus, the expected degree of $(a,x_1,z_3)$ is at most $n/q=n/W^{1/3}$.

We now remove all nodes of degree more than $6n/W^{1/3}$ from $G'$.
The probability that a fixed node is removed is at most $1/6$ by Markov's inequality.
Thus, the probability that any fixed triangle is eliminated is at most $1/2$.
To solve Exact Triangle Listing on $G$, we solve Sparse Triangle Listing on $G'$,
repeating logarithmically many times to ensure a high survival probability bound per triangle.
\end{proof}

\begin{lemma}\label{lem:AN}
$\SparseTriList(m,t,\dmax)\le \OO(\ANSparseTri(m+t\dmax))$. 
\end{lemma}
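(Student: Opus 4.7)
To prove $\SparseTriList(m,t,\dmax)\le \OO(\ANSparseTri(m+t\dmax))$, I would reduce triangle listing in $G$ to a single application of $\ANSparseTri$ on a graph of at most $m+t\dmax$ edges, followed by cheap post-processing whose cost fits within the same budget (since $\ANSparseTri(N)\ge \Omega(N)$). Since $G$ itself has $m\le m+t\dmax$ edges, applying $\ANSparseTri$ directly to $G$ is permissible; the key is to extract not just the set of triangle-involved vertices but the actual list of triangles afterward in $\OO(t\dmax)$ additional time.

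The first step is to invoke $\ANSparseTri(G)$ to identify the set $V_T$ of vertices lying in at least one triangle. Because each triangle contributes at most $3$ vertices, $|V_T|\le 3t$. Any edge not incident to $V_T$ cannot participate in a triangle, so I would immediately prune the graph to the subgraph $G_T$ on edges with at least one endpoint in $V_T$; this retains all triangles of $G$ while having at most $\sum_{v\in V_T} d_v\le 3t\dmax$ edges.

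The second step is to enumerate the triangles within $G_T$. My plan is to orient each edge of $G_T$ by a canonical degree-based ordering (from lower-degree to higher-degree endpoint, breaking ties by ID). For each $v\in V_T$, I would iterate over its out-neighbors $N^+(v)$, and for each $u\in N^+(v)$ iterate over $u$'s out-neighbors $w\in N^+(u)$, testing $w\in N(v)$ in $O(1)$ time via a hashtable on $N(v)$. Each actual triangle of $G$ is discovered exactly once, at its lowest-degree vertex, and contributes $O(\dmax)$ work (to scan $N^+(u)$). Combined with one call to $\ANSparseTri$ on the initial $G$, the total cost fits within $\OO(\ANSparseTri(m+t\dmax))$.

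\textbf{Main obstacle.} The most delicate part is making the enumeration step run in $\OO(t\dmax)$ rather than $\OO(t\dmax^2)$. A naive loop over $v\in V_T$ that scans all of $v$'s neighborhood against itself spends $O(d_v \dmax)$ per vertex, and $\sum_{v\in V_T} d_v$ can already be as large as $3t\dmax$, so the naive bound is $O(t\dmax^2)$. The amortization via degree-orientation—charging each exploration step to an actual triangle discovered at its canonical vertex—is the technical heart of the argument; I expect the care lies in ruling out pathological overlapping-triangle configurations and in verifying the orientation is consistent across triangles of varying structure. If this direct amortization proves insufficient, the fallback is to build a small witness gadget (a probe $p_v$ adjacent to $N(v)$ for each $v\in V_T$, adding at most $3t\dmax$ edges to give a graph of size $O(m+t\dmax)$) and reuse $\ANSparseTri$ there to narrow the enumeration; this detour stays within the stated budget.
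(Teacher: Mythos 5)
The approach you take is genuinely different from the paper's, and unfortunately the obstacle you yourself flag is fatal rather than merely ``delicate.'' Your plan makes a \emph{single} call to the All-Nodes oracle and then tries to extract the actual triangles by combinatorial enumeration on the pruned graph $G_T$ (with $m'=O(t\dmax)$ edges). The degree-orientation amortization does not give $O(t\dmax)$: the total work is $\sum_u |N^-(u)|\cdot|N^+(u)|\le \dmax\sum_u |N^-(u)| = \dmax\cdot m' = O(t\dmax^2)$, and the charging scheme breaks because an exploration step $(v,u,w)$ with $w\notin N(v)$ finds no triangle and hence has nothing to charge to. More fundamentally, no near-linear post-processing from $V_T$ alone can work: listing $\Theta(m')$ triangles in an $m'$-edge graph requires $m'^{4/3-o(1)}$ time conditionally (this very paper proves it), and $V_T$ carries no extra information — a hard listing instance can already have every vertex in a triangle. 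Your fallback gadget also gives nothing: a probe $p_v$ adjacent to $N(v)$ lies in a triangle of the gadget iff $v$ lies in a triangle of $G$, which is exactly what you already learned from the first call.

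The paper's proof instead uses the oracle \emph{recursively}, and this recursion is the missing idea. It splits $A$ into halves $A_1,A_2$; for each half it calls the All-Nodes oracle on the induced subgraph $G[A_j\cup B\cup C]$ to find which $B$-vertices participate in a triangle with that particular half, then prunes $B$ to that subset $B_j$ and recurses on $G[A_j\cup B_j\cup C]$. The crucial accounting is per recursion level: every $A\times B$ and $A\times C$ edge appears in one subproblem per level, and a $B\times C$ edge $(b,c)$ appears in at most $f(b)$ subproblems (where $f(b)$ is the number of triangles through $b$), giving total size $O(m+\sum_b f(b)\dmax)=O(m+t\dmax)$ per level. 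With $O(\log n)$ levels and super-linearity of $\ANSparseTri(\cdot)$, the whole thing runs in $\OO(\ANSparseTri(m+t\dmax))$. The oracle is doing genuine work at every level (progressively narrowing which $(a,b)$ pairs can extend to a triangle), which is what lets you get the actual list rather than only the set of triangle-covered vertices; a single oracle call cannot supply that.
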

\begin{proof}
We describe a recursive algorithm for Sparse Triangle Listing:
Given an $m$-edge tripartite graph $G$ with node partition $A,B,C$, arbitrarily divide $A$ into two subsets $A_1$ and $A_2$ of size $|A|/2$.
For each $j\in [2]$, run the All-Nodes Sparse Triangle oracle to identify the subset $B_j$ of nodes in $B$ that participate in triangles in the subgraph of $G$ induced by $A_j\cup B\cup C$; then recursively solve the Sparse Triangle Listing problem on the subgraph of $G$ induced by $A_j\cup B_j\cup C$.  

Observe that each edge in $A\times B$ or $A\times C$ participates in one subproblem per level in the recursion.
Furthermore, an edge $(b,c)\in B\times C$ participates in at most $f(b)$ subproblems per level, where $f(b)$ denotes the number of triangles that $b$ participates in.  Thus, the sum of the number of edges over all the subproblems per level 
is $O(m + \sum_{b\in B} f(b)\dmax) = O(m+t\dmax)$.  There are $O(\log n)$ levels of recursion.
By super-linearity of $\ANSparseTri(\cdot)$, the entire algorithm runs in $O(\ANSparseTri(m+t\dmax)\log n)$ time.
\end{proof}

\begin{theorem}
Under the Exact Triangle hypothesis, All-Nodes Sparse Triangle requires $m^{5/4-o(1)}$ time.
\end{theorem}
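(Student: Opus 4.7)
The plan is to chain together Lemmas~\ref{lem:reduce2}, \ref{lem:reduce1:deg}, and \ref{lem:AN} and then balance the parameters. Starting from an Exact Triangle instance on $n$ nodes with polynomial weights, I would first apply Lemma~\ref{lem:reduce2} to reduce to Exact Triangle Listing with weight bound $W = n^3/t$ and list size $\OO(t)$ for a parameter $t$ to be chosen. Then I would apply Lemma~\ref{lem:reduce1:deg}, which reduces this to Sparse Triangle Listing with $m' = n^2 W^{1/3} = n^3/t^{1/3}$ edges, $t$ triangles, and maximum degree $\dmax = O(n/W^{1/3}) = O(t^{1/3})$. Finally, Lemma~\ref{lem:AN} converts this into an All-Nodes Sparse Triangle instance with $O(m' + t\,\dmax) = O(n^3/t^{1/3} + t^{4/3})$ edges.

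The key step is to balance $t$ so that the two terms $n^3/t^{1/3}$ and $t^{4/3}$ in the input size to $\ANSparseTri$ are equal. Setting $n^3/t^{1/3} = t^{4/3}$ gives $t = n^{9/5}$, so the resulting All-Nodes Sparse Triangle instance has $m = \Theta(n^{12/5})$ edges. If $\ANSparseTri(m) \le O(m^{5/4 - \eps})$ for some $\eps > 0$, then the whole pipeline would solve Exact Triangle in $\OO(n^{(12/5)(5/4 - \eps)}) = \OO(n^{3 - (12/5)\eps})$ time, contradicting the Exact Triangle hypothesis. Hence $\ANSparseTri(m) \ge m^{5/4-o(1)}$.

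I expect no serious obstacle here: all the ingredients are already in place, and the exponent $5/4$ comes out exactly from the balancing. The only thing worth double-checking is that the super-linearity assumption on $\ANSparseTri(\cdot)$ used in Lemma~\ref{lem:AN} is compatible with the target bound of $m^{5/4}$ (it is, since $5/4 > 1$), and that the randomization introduced by Lemmas~\ref{lem:reduce2} and \ref{lem:reduce1:deg} does not inflate the final running time beyond polylogarithmic factors (which is absorbed by the $\OO$ notation and the $o(1)$ in the exponent).
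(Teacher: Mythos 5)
Your proposal is correct and matches the paper's proof essentially line for line: chain Lemmas~\ref{lem:reduce2}, \ref{lem:reduce1:deg}, and \ref{lem:AN}, balance $n^3/t^{1/3}$ against $t^{4/3}$ by setting $t = n^{9/5}$ to get an $\ANSparseTri$ instance of size $n^{12/5}$, and conclude the contradiction with the Exact Triangle hypothesis. Your two sanity checks (super-linearity of $\ANSparseTri$ and absorption of polylog factors) are also correct and handled implicitly in the paper.
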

\begin{proof}
Combining Lemmas \ref{lem:reduce2}, \ref{lem:reduce1:deg}  and \ref{lem:AN} gives 
\begin{eqnarray*}
\ExactTri(n\mid n^{O(1)}) &\le& \OO(1)\cdot  \ANSparseTri(n^3/t^{1/3} + t^{4/3})\\
&\le& \OO(1)\cdot \ANSparseTri(n^{12/5})  \qquad\mbox{by setting $t=n^{9/5}$}\\
&\le& \OO(n^{3-(12/5)\eps}) \qquad\mbox{if $\ANSparseTri(m)\le O(m^{5/4-\eps})$}.
\end{eqnarray*}

\vspace{-\bigskipamount}
\end{proof}

Unfortunately, a nontrivial lower bound for Sparse Triangle Detection is still out of reach  under the Exact Triangle hypothesis.  However, we have the following result if we assume the \emph{Strong Exact Triangle hypothesis}, namely, that Exact Triangle for integer weights in $[\pm n]$ requires
$n^{3-o(1)}$ time.  (Recent work has considered similarly defined Strong APSP hypothesis, Strong 3SUM hypothesis, and so on~\cite{CVXstoc23,JinXstoc23}.)   The proof here only needs Lemma~\ref{lem:reduce1}, so the reduction is even simpler (and deterministic).

\begin{theorem}\label{thm:tri:detect}
Under the Strong Exact Triangle hypothesis, Sparse Triangle Detection requires
$m^{9/7-o(1)}$ time.
\end{theorem}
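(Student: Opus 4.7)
The plan is to observe that the construction from Lemma~\ref{lem:reduce1} is in fact a \emph{detection} reduction, not merely a listing one: inspecting its proof, the triangles of the unweighted graph $G'$ are in one-to-one correspondence with the zero-weight triangles of the weighted graph $G$, so existence on one side is equivalent to existence on the other. Thus, with no further work, we obtain
\[
\ExactTri(n \mid W) \le O(1)\cdot \SparseTri(n^2 W^{1/3}).
\]
Crucially, no hashing (hence no randomization) is needed here, because the universe size $W$ is already subcubic in $n$ under the Strong Exact Triangle hypothesis; this is why Lemma~\ref{lem:reduce2} can be bypassed entirely.

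Next, plug in $W = n$, which is exactly the weight regime of the Strong Exact Triangle hypothesis. Then $n^2 W^{1/3} = n^{7/3}$, and we get
\[
\ExactTri(n \mid n) \le O(1)\cdot \SparseTri(n^{7/3}).
\]
Suppose for contradiction that $\SparseTri(m) \le O(m^{9/7 - \eps})$ for some $\eps > 0$. Substituting $m = n^{7/3}$ gives a bound of $O\bigl(n^{(7/3)(9/7 - \eps)}\bigr) = O\bigl(n^{3 - (7/3)\eps}\bigr)$, which is truly subcubic and contradicts the Strong Exact Triangle hypothesis. The exponent $9/7$ is forced by the identity $(7/3)\cdot(9/7) = 3$: at $W = n$ the Sparse Triangle instance has $m = n^{7/3}$ edges, and $m^{9/7}$ is exactly the threshold that matches the cubic target.

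There is no real obstacle; the work was already done inside Lemma~\ref{lem:reduce1}, and the only thing to check is that its statement can be read ``with $\ExactTriList$ and $\SparseTriList$ both replaced by their detection analogues,'' which is immediate from the triangle correspondence in its proof. Compared to Theorem~\ref{thm:tri:list}, this proof is shorter and fully deterministic, at the cost of assuming the stronger $[\pm n]$-weight hypothesis rather than the polynomial-weight one.
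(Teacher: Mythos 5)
Your proof is correct and is essentially identical to the paper's: both observe that the construction in Lemma~\ref{lem:reduce1} works for detection as well as listing, yielding $\ExactTri(n\mid W)\le O(1)\cdot\SparseTri(n^2W^{1/3})$, and then set $W=n$ to get $\ExactTri(n\mid n)\le O(\SparseTri(n^{7/3}))$, which gives the $m^{9/7-o(1)}$ lower bound.
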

\begin{proof}
The analog of Lemma \ref{lem:reduce1} for detection instead of listing
implies that $\ExactTri(n\mid W) \le O(1)\cdot \SparseTri(n^2W^{1/3})$, and so
$\ExactTri(n\mid n) \le O(\SparseTri(n^{7/3}))$.
\end{proof}

Abboud et al.~\cite{AbboudBKZ22} were the first to prove conditional lower bounds for Sparse Triangle
Detection under a variant of the Exact Triangle hypothesis (which they call ``the Strong Zero-Triangle Conjecture'').
More precisely, they showed a better lower bound of $m^{4/3-o(1)}$ under an
``unbalanced'' version of the hypothesis, specifically, that
Exact Triangle for a tripartite graph with $n$ nodes in the first part, and
$\sqrt{n}$ nodes in the second and third parts, and weights in $[\pm\sqrt{n}]$ requires
$n^{2-o(1)}$ time.  In contrast, Theorem~\ref{thm:tri:detect} assumes
the Strong Exact Triangle hypothesis in the
balanced case, which appears more natural.

Compared to Abboud et al.'s reduction (which is extremely simple), ours is more powerful in some sense.
Let $\ExactTri(n_1,n_2,n_3\mid W)$ denote the time complexity of Exact Triangle
for a tripartite graph with $n_1,n_2,n_3$ nodes in its three parts and edge weights
in $[\pm W]$.  They basically observed that 
\[ \ExactTri(n_1,n_2,n_3\mid W)\le
O(1)\cdot \SparseTri(n_1n_2+n_1n_3 + n_2n_3W).
\]
Setting $n_2=n_3=W=\sqrt{n}$ gives $\ExactTri(n,\sqrt{n},\sqrt{n} \mid \sqrt{n})\le O(\SparseTri(n^{3/2}))$,
which is $O(n^{2-(3/2)\eps})$ if $\SparseTri(m)\le O(m^{4/3-\eps})$.  In contrast,
straightforward modification of the proof of Lemma~\ref{lem:reduce1} yields a more general bound:
\begin{eqnarray*}
 \ExactTri(n_1,n_2,n_3\mid W) &\le& \min_{q_1,q_2,q_3:\ q_1q_2q_3=W} O(1)\cdot\SparseTri(n_1n_2q_3+n_1n_3q_2+n_2n_3q_1)\\
&=& O(1)\cdot\SparseTri((n_1n_2n_3)^{2/3}W^{1/3} + n_1n_2 + n_1n_3 + n_2n_3).
\end{eqnarray*}

\section{Acknowledgements}

We would like to thank Ce Jin for helpful discussions. 

\bibliographystyle{alphaurl} 
\bibliography{main}

\appendix
\section{A Simple Reduction from 3SUM to Sparse Triangle Listing}\label{app:3sum}

By combining with known  reductions from $3$SUM to Convolution-$3$SUM~\cite{patrascu2010towards,KopelowitzPP16,ChanH20} and from Convolution-3SUM to Exact Triangle~\cite{VWfindingcountingj}, the result in \cref{sec:aesparsetri2}
implies a reduction from $3$SUM to Sparse Triangle Listing.
In this appendix, we describe a more direct modification to obtain a simple
reduction from $3$SUM to Sparse Triangle Listing, without going through Convolution-$3$SUM and Exact Triangle.  
The proof is very similar to the one in \cref{sec:aesparsetri2} (based on an idea in  \cite{JinXstoc23}), and is also similar to ideas contained in a proof by Abboud, Bringmann and Fischer~\cite{AbboudBF23}, but we feel it is worthwhile to write out the proof explicitly, since it is simpler (and more ``symmetric'') than P{\u{a}}tra{\c{s}}cu's original reduction~\cite{patrascu2010towards} or Kopelowitz, Pettie and Porat's later reduction~\cite{KopelowitzPP16}, and may have pedagogical value (being the easiest to teach).  

Let $\ThreeSUM(n\mid W)$ denote the time complexity of the $3$SUM problem for $n$ numbers in $[\pm W]$.
Let $\ThreeSUMList(n,t\mid W)$ denote the time complexity of the problem of listing all $t$ triples of numbers summing to 0 for $n$ numbers in $[\pm W]$ (note that $t$ is not given in advance).

We first adapt Lemma~\ref{lem:reduce1}.

\begin{lemma}\label{lem:reduce1:$3$SUM}
$\ThreeSUMList(n,t\mid W)\le O(1)\cdot \SparseTriList(nW^{1/3},t)$.
\end{lemma}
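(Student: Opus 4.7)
The plan is to mirror the proof of \cref{lem:reduce1}, adapted to the 3SUM setting where weights live on elements rather than on edges of a graph. The preprocessing step would be essentially identical to the Exact Triangle case: reduce 3SUM on $[\pm W]$ to $O(1)$ instances of 3SUM on $[\pm q]^3$ with $q := W^{1/3}$, via the same digit-decomposition map $\varphi(x) = (x_1, x_2, x_3)$ for $x = x_1 q^2 + x_2 q + x_3$ and the same constant-size target set $\Delta$; shifting entries of one of the three lists reduces each target back to $(0,0,0)$. After this, we are in the setting where each element is a triple in $[\pm q]^3$ and we are looking for triples that sum to $(0,0,0)$ coordinate-wise.

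Given $A, B, C \subseteq [\pm q]^3$ with $|A|=|B|=|C|=n$, I would then construct a tripartite graph $G'$ whose node parts are of the form $A \times [\pm q]$, $B \times [\pm q]$, and $C \times [\pm q]$, yielding $O(n q) = O(n W^{1/3})$ nodes per side. The single auxiliary coordinate on each node serves as a Fredman-style guess for one coordinate of the ``missing'' third element in the triangle (e.g.\ on the $A$-side, a guess for one coordinate of $c$), analogous to the coordinates $x_1, z_3, y_3, x_2, z_2, y_1$ carried by the labels in \cref{lem:reduce1}. The edges of $G'$ are designed so that each of the three triangle edges jointly enforces one of the constraints $a_i + b_i + c_i = 0$ for $i = 1, 2, 3$, using both endpoint labels in the Fredman-splitting style of the $-y_3 - z_3$ expression from that proof. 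With this setup, the bijection between triangles of $G'$ and 3SUM solutions follows by the same coordinate-by-coordinate accounting, and the overall reduction gives $\ThreeSUMList(n, t \mid W) \leq O(1) \cdot \SparseTriList(n W^{1/3}, t)$ by simply running a sparse triangle listing algorithm on $G'$.

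The main obstacle I expect is bounding the edge count by $O(nW^{1/3})$ rather than $\Omega(n^2)$. In \cref{lem:reduce1}, the factor $q$ multiplier on top of the $n^2$ edges of $G$ arose from splitting one coordinate of the edge weight between two endpoint labels; a direct transliteration to 3SUM would iterate over all pairs $(a,b) \in A \times B$ and yield $\Omega(n^2)$ edges, since there is no underlying graph $G$ to sparsify the pairs. The saving of a factor $n$ must come from the structural fact that the ``weight'' of the virtual edge $a \to b$ depends only on $a$: hence the conditions at an $A$-$B$ edge of $G'$ can be engineered to pin \emph{both} endpoint labels to values determined by $a$ alone (rather than by the pair $(a,b)$), so that only one specific $(a,b)$ combination per auxiliary slice contributes an edge. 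Making this precise, so that the edge count is indeed $O(n q)$ while every valid 3SUM triple still appears as exactly one triangle, is the technically delicate part; the rest of the proof is routine and follows the template of \cref{lem:reduce1} essentially verbatim.
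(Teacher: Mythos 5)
Your high-level framing (digit-decompose into $[\pm q]^3$, then build an unweighted tripartite graph whose triangles encode solutions coordinate-by-coordinate) matches the paper, but the node sets you chose --- $A\times[\pm q]$, $B\times[\pm q]$, $C\times[\pm q]$ --- are not the right analog, and the edge-count gap you flag as ``technically delicate'' is not closeable within that framework. With $A$-side nodes $(a,u)$ and $B$-side nodes $(b,v)$, the only constraints available at an $A$--$B$ edge are of the form $a_i+b_i+c_i=0$ with $c_i$ guessed via $u$ and/or $v$; these pin down at most one or two \emph{coordinates} of $b$, and nothing prevents many elements of $B$ from sharing those coordinate values. So the degree of a node $(a,u)$ can be $\Omega(n)$ and the total edge count $\Omega(n^2)$. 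Your suggested repair --- ``pin both endpoint labels to values determined by $a$ alone'' --- at best fixes the auxiliary slices $(u,v)$, not the element $b$ itself, so it does not bring the count down to $O(nq)$.

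The paper's resolution is to \emph{drop $A,B,C$ from the node sets entirely}. Since $3$SUM has no underlying vertex set, the analog of the vertices of $G$ in \cref{lem:reduce1} is a single vertex per part, and the $n$ list elements are the analog of the \emph{edges} of $G$. Accordingly, the node parts of $G'$ are $\{1\}\times[\pm q]^2$, $\{2\}\times[\pm q]^2$, $\{3\}\times[\pm q]^2$, of size $O(q^2)$ with no factor of $n$ at all; an edge between $(1,x_1,z_3)$ and $(2,y_3,x_2)$ is present iff $(x_1,x_2,-y_3-z_3)\in A$, and similarly for the other two pairs. Each element of $A$ then contributes $O(q)$ edges (one per way of writing its third coordinate as $-y_3-z_3$), yielding the required $O(nq)=O(nW^{1/3})$ edge bound, and the list element itself is recoverable from the endpoint labels of the edge. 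Once the node sets are set up this way, the triangle-to-solution correspondence does follow the template of \cref{lem:reduce1} essentially verbatim, as you anticipated --- but that correct choice of node set \emph{is} the missing idea.
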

\begin{proof}
$3$SUM for numbers in $[\pm W]$ reduces to $O(1)$
instances of $3$SUM for triples in $[\pm q]^3$ with $q:=W^{1/3}$, by the same mapping $\varphi$.

Let $A,B,C$ be three given sets of $n$ triples in $[\pm q]^3$. 
Construct an unweighted tripartite graph $G'$ with node partition $\{1\}\times [\pm q]^2$,
$\{2\}\times [\pm q]^2$, $\{3\}\times [\pm q]^2$:
\begin{itemize}
\item Add edge between $(1,x_1,z_3)\in \{1\}\times [\pm q]^2$ and $(2,y_3,x_2)\in \{2\}\times [\pm q]^2$
iff $(x_1,x_2,-y_3-z_3)\in A$.
\item Add edge between $(2,y_3,x_2)\in \{2\}\times [\pm q]^2$ and $(3,z_2,y_1)\in \{3\}\times [\pm q]^2$
iff $(y_1,-x_2-z_2,y_3)\in B$.
\item Add edge between $(3,z_2,y_1)\in \{3\}\times [\pm q]^2$ and $(1,x_1,z_3)\in \{1\}\times [\pm q]^2$
iff $(-x_1-y_1,z_2,z_3)\in C$.
\end{itemize}

The number of edges in $G'$ is $O(nq)=O(nW^{1/3})$.  
Observe that  
$(1,x_1,z_3),(2,y_3,x_2),(3,z_2,y_1)$ form a triangle in~$G'$ iff $(x_1,x_2,x_3)\in A$, $(y_1,y_2,y_3)\in B$, $(z_1,z_2,z_3)\in C$
sum to $(0,0,0)$ for some $x_3,y_2,z_1$.
Thus, we can solve $3$SUM Listing on $A,B,C$ by solving Triangle Listing on $G'$.
\end{proof}

\begin{lemma}\label{lem:reduce2:$3$SUM}
$\ThreeSUM(n\mid n^{O(1)}) \le \OO(1)\cdot \ThreeSUMList(n,\OO(t)\mid n^3/t)$ for any $t$.
\end{lemma}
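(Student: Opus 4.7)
The plan is to mirror the proof of Lemma~\ref{lem:reduce2} essentially verbatim; the only property of that setting that was actually used is that the total number of candidate objects is $O(n^3)$, and the 3SUM instance likewise has $|A\times B\times C|=O(n^3)$ candidate triples, so every line of the argument should transfer.

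Concretely, I would pick a random prime $p\in[n^3/(2t),\,n^3/t]$, replace each input number by its residue mod $p$ to obtain an instance with numbers in $[\pm n^3/t]$, and invoke the assumed $\ThreeSUMList$ subroutine to enumerate every triple whose sum is congruent to $0\bmod p$. The answer to the original decision instance is yes iff at least one of the listed triples actually sums to zero over the integers, and this final check costs $O(1)$ per listed triple.

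For the bound on the list size, by the prime number theorem there are $\Theta(\tfrac{n^3/t}{\log(n^3/t)})$ primes in the range, while any fixed nonzero integer in $[\pm n^{O(1)}]$ has at most $O(\tfrac{\log n}{\log(n^3/t)})$ prime divisors in that range (since every such divisor is at least $n^3/(2t)$). Hence a fixed triple whose sum is nonzero is congruent to $0\bmod p$ with probability $O((t/n^3)\log n)$, and summing over all $O(n^3)$ triples gives an expected false-positive count of $\OO(t)$, matching the bound in the statement. If the listing subroutine ever exceeds its allotted $\ThreeSUMList(n,\OO(t)\mid n^3/t)$ time budget, we abort and infer that the answer is yes; by Markov's inequality this inference is correct with probability $\Omega(1)$, and standard logarithmic repetition boosts this to high probability. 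I expect no genuine obstacle here: the hashing argument is entirely oblivious to the graph structure that distinguished Exact Triangle from 3SUM, so the proof transcribes from Lemma~\ref{lem:reduce2} with the single observation that $n^3$ triples play the role of $n^3$ triangles.
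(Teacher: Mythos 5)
Your proposal is correct and is exactly what the paper has in mind; in fact the paper's entire proof of this lemma is the single sentence ``Similar to the proof of Lemma~\ref{lem:reduce2},'' and your write-out faithfully instantiates that proof with $3$SUM triples in place of triangles, correctly noting that the only property used is that there are $O(n^3)$ candidate objects.
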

\begin{proof}
Similar to the proof of Lemma~\ref{lem:reduce2}.
\end{proof}

\begin{theorem}\label{thm:tri:list:$3$SUM}
Under the $3$SUM hypothesis, %
All-Edges Sparse Triangle
requires $m^{4/3-o(1)}$ time.
\end{theorem}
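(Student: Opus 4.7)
The plan is to directly mirror the proof of \cref{thm:tri:list} in the $3$SUM setting, using the two preceding lemmas of this appendix (the hashing-based reduction from $3$SUM to $3$SUM Listing, and the graph-encoding reduction from $3$SUM Listing to Sparse Triangle Listing) in place of Lemmas~\ref{lem:reduce2} and~\ref{lem:reduce1}. Chaining these two lemmas gives, for any parameter $t$,
\[
\ThreeSUM(n\mid n^{O(1)}) \;\le\; \OO(1)\cdot \ThreeSUMList(n,\OO(t)\mid n^3/t) \;\le\; \OO(1)\cdot \SparseTriList\bigl(n\cdot(n^3/t)^{1/3},\,\OO(t)\bigr) \;=\; \OO(1)\cdot \SparseTriList(n^2/t^{1/3},\,\OO(t)).
\]

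I would then balance the two arguments of $\SparseTriList$ by equating the edge count $n^2/t^{1/3}$ with the triangle count $t$; this forces $t=n^{3/2}$, for which the edge count is also $m=n^{3/2}$. Substituting back yields
\[
\ThreeSUM(n\mid n^{O(1)}) \;\le\; \OO\bigl(\SparseTriList(n^{3/2},\,\OO(n^{3/2}))\bigr).
\]
Consequently, if $\SparseTriList(m,m)\le O(m^{4/3-\eps})$ for some $\eps>0$, then $\ThreeSUM(n\mid n^{O(1)})\le \OO(n^{(3/2)(4/3-\eps)})=\OO(n^{2-(3/2)\eps})$, contradicting the $3$SUM hypothesis.

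Finally, to transfer the lower bound from Sparse Triangle Listing (in the regime $t=\Theta(m)$) to All-Edges Sparse Triangle, I would invoke the known black-box reduction of~\cite[Theorem~1.4]{DurajK0W20}, exactly as in the proof of \cref{thm:tri:list}. There is no substantive obstacle here: once the two appendix lemmas are in place, the theorem follows from a one-line parameter balancing plus a cited reduction. The only noteworthy point is that the balance $t=n^{3/2}$ makes the exponent $(3/2)\cdot(4/3)=2$ exactly match the conjectured $n^{2-o(1)}$ lower bound for $3$SUM, so the $m^{4/3}$ threshold for Sparse Triangle Listing is precisely the tight barrier implied by the $3$SUM hypothesis.
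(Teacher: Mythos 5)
Your proposal is correct and is essentially identical to the paper's proof: you chain Lemmas~\ref{lem:reduce2:$3$SUM} and~\ref{lem:reduce1:$3$SUM}, set $t=n^{3/2}$, derive $\OO(n^{2-(3/2)\eps})$ from the hypothetical $\SparseTriList(m,m)\le O(m^{4/3-\eps})$ algorithm, and then invoke the reduction from~\cite{DurajK0W20}, exactly as the authors do.
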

\begin{proof}
Combining Lemmas~\ref{lem:reduce1:$3$SUM} and \ref{lem:reduce2:$3$SUM} gives
\begin{eqnarray*}
\ThreeSUM(n\mid n^{O(1)}) &\le& \OO(1)\cdot \SparseTriList(n^2/t^{1/3},\OO(t))\\
&\le& \OO(1)\cdot \SparseTriList(n^{3/2},\OO(n^{3/2}))) \qquad\mbox{by setting $t=n^{3/2}$}\\
&\le& \OO(n^{2-(3/2)\eps}) \qquad\mbox{if $\SparseTriList(m,m)\le O(m^{4/3-\eps})$}.
\end{eqnarray*}

To prove hardness for All-Edges Sparse Triangle instead of Sparse Triangle Listing, we can again apply a known reduction from Sparse Triangle Listing to All-Edges Sparse Triangle \cite{DurajK0W20}. 
\end{proof}

We remark that the proof of Lemma~\ref{lem:reduce1:$3$SUM} is similar to
a proof of Jin and Xu~\cite{JinXstoc23} that $\ThreeSUM(n\mid n^2) \le \OO(1)\cdot \SparseTri(n^{5/3})$,
implying an $m^{6/5-o(1)}$ lower bound for Sparse Triangle Detection under the Strong $3$SUM
hypothesis.  In fact, it is basically a reinterpretation.  The only main difference is that their proof used a different mapping $\varphi$
via the Chinese remainder theorem with three primes.  Their proof also used randomization, but the only reason was that they wanted the lower bound to hold for more structured graph instances with bounded
maximum degree.  A proof in Abboud, Bringmann and Fischer's work~\cite{AbboudBF23} also contained similar ideas (they also used randomized hashing to bound vertex degrees).

\begin{remark}
    Our first approach in \cref{sec:aesparsetri1} can also be adapted to give a direct reduction from $3$SUM to All-Edges Sparse Triangle, by combining the idea of halving the weights with the reduction from Real $3$SUM to \#All-Edges Sparse Triangle in \cite{CVXstoc23}.  The resulting reduction is deterministic and completely avoids hashing, unlike all previous reductions.  (Note that if we instead go through Convolution-3SUM and Exact Triangle, the part of the reduction from 3SUM to Convolution-3SUM would already need hashing~\cite{patrascu2010towards,KopelowitzPP16}, although this part has been derandomized~\cite{ChanH20}.)
\end{remark}

\end{document}